\documentclass[11pt,hidelinks]{article}

\usepackage{amssymb, amsmath, amsthm, amsfonts, enumitem, latexsym, tikz, color, hyperref, soul}
\usepackage[margin=1in,paperwidth=8.5in,paperheight=11in]{geometry} 
\usepackage[nameinlink]{cleveref}
\usetikzlibrary{calc, shapes.multipart, arrows.meta, positioning} 
\tikzstyle{pt}=[circle, fill=black, inner sep=2pt]

\newtheorem{theorem}{Theorem}[section] \newtheorem{corollary}[theorem]{Corollary} \newtheorem{proposition}[theorem]{Proposition} 
\newtheorem{claim}[theorem]{Claim} \newtheorem{lemma}[theorem]{Lemma} \newtheorem*{theorem*}{Theorem}
\theoremstyle{definition} \newtheorem*{definition}{Definition}
\theoremstyle{remark} \newtheorem*{remark*}{Remark} 

\newcommand{\eps}{\varepsilon}
\DeclareMathOperator{\poly}{poly} \DeclareMathOperator{\quasipoly}{quasipoly} 
\DeclareMathOperator{\C}{C} \DeclareMathOperator{\D}{D} \DeclareMathOperator{\E}{E} \DeclareMathOperator{\F}{F} 
\newcommand{\mcX}{\mathcal X} \newcommand{\mcY}{\mathcal Y} \newcommand{\mcR}{\mathcal R} \newcommand{\mctR}{{\widetilde{\mathcal R}}}

\date{}
\title{Explicit unbalanced $1$-expanders with small degree and right size}

\author{
  Bruno Bauwens\thanks{
    National Research University Higher School of Economics, Faculty of Computer Science, Moscow, Russia.
    This work/article is an output of a research project (HSE-BR-2025-024) implemented as part of the Basic Research Program at HSE University. 
  }
  \and
  Marius Zimand\thanks{
    Department of Computer and Information Sciences, Towson University, Baltimore, MD.
  }
} 

\begin{document}
\maketitle

\begin{abstract}
  An explicit graph is given with left size $N$, left degree $\widetilde O(\log^2 N)$, right size $(1+o(1))K$ and $1$-expansion up to~$K$, 
  meaning that every left subset of size $K' \le K$ has at least $K'$ neighbors.

  Let $\C(x)$ be the minimal length of a program that prints~$x$ (i.e., the central concept in Kolmogorov complexity).  
  The $1$-expander is used to obtain an algorithm that on input $x$ computes in time $\poly(|x|)$ a list with $\widetilde O(|x|^3)$ programs 
  such that at least 1 program prints $x$ and has length $\C(x) + O(1)$. 
  This improves on the $O(|x|^{6+\eps})$ upper bound in~\cite{zim:c:shortlistshortproof} 
  and is close to the $\Omega(|x|^2)$ lower bound from~\cite[theorem 4]{bmvz:j:shortlist}. 

  In the companion paper~\cite{companion-DS}, the $1$-expander is used to obtain dynamic dictionaries in which the query operation has non-adaptive memory access. 
  
\end{abstract}

\section{Introduction}

In this paper, we consider bipartite expanders in which the left side is much larger than the right side, hence, the name ``unbalanced.'' 
\begin{definition}
  A graph has $e$-expansion up to~$K$ if every left set $S$ of size at most~$K$ has at least $e|S|$ neighbors.
\end{definition}

A lot of attention has been given to lossless expanders, which are graphs with $(1-\eps)D$-expansion. 
In contrast, we aim for explicit $1$-expanders that have small left degree and small right side.
Such expanders are used in applications that require matching or some online variant of it.
One such application is in dynamic dictionary datastructures with non-adaptive probing, as proposed in~\cite{lar-pag-pit-zam:t:nonadaptivedictionary}.
Our graph mentioned in  the abstract  is used in the companion paper~\cite{companion-DS}, dedicated to this topic.
A second example is the polynomial time generation of lists containing $O(1)$-shortest programs, discussed in the next section.
\smallskip

\noindent
Obviously, graphs with $1$-expansion up to~$K$ have right size at least~$K$.
For graphs with left size $N$ and right size $M$,
the lower bound on the degree is $\Omega((\log \tfrac N K )/(\log \tfrac M K))$,
see \cite[lemma 4.1 p50]{bmvz:t:shortlist}.

The table below gives existing constructions and our new one. 
\footnote{We use the notation $\smash{\widetilde O(f(N)) = f(N) \poly(\log f(N))}$.} Note that if in our graph (last row) we merge groups of $n$ right nodes into 1 node, both the right size and the expansion shrink by a factor of $n$, and we obtain the graph announced in the abstract.

\begin{minipage}{0.95\textwidth}
  \bigskip
  \begin{center}
    Graphs with expansion up to $K$ and left size $2^n$. 

    The first is non-explicit, the others explicit. 

    \smallskip
    \begin{tabular}{|l|l|l|l|}
      \hline
      degree $D$ & right size & expansion & reference \\
      \hline
      {\color{gray} $n+2$} & \color{gray} $4K$ & \color{gray}1  & \color{gray}\cite[remark 3.6 p46]{bmvz:j:shortlist} \\
      $O(n(\tfrac 1 \eps \log K)^{1+1/\alpha})$ & $D^2 K^{1+\alpha}$ & $(1-\eps)D$ & \cite[theorem 1.3]{guv:j:extractor} \\
      $O(n(\tfrac 1 \eps \log K)^{1+1/\alpha})$ & $D K^{1+\alpha}$   & $(1-\eps)D$ & \cite[theorem 1.3]{kal-tashma:c:unbalanced} \\
      $n(\frac{1}{\eps} \log K)^{O(\log \log K)}$ & $D K \poly(\log n)$   & $(1-\eps)D$ & \cite[theorem 18]{lu-oli-zim:c:optimalcode} \\
      $n^{5+o(1)} $ & $O(K)$ & 1 & \cite[sect 6]{teu:j:shortlists},\cite[lemma 4]{zim:c:shortlistshortproof} \\
      $\widetilde{O}(n^2)$ & $(1+o(1))nK$ & $n$ & this paper, proposition~\ref{prop:expander_quadratic_degree} \\
      \hline
    \end{tabular}
  \end{center}
  \bigskip
\end{minipage}

\noindent
For the application with short lists, the left set is the set of all strings.

\newcommand{\propositionConstantExpander}{
  For each $K$ there exists an explicit graph with left set $\{0,1\}^*$, right size~$(1+o(1)) K$, degree $\widetilde O(|x|^2)$ of left node $x$, and $1$-expansion up to~$K$. 
}

\begin{theorem}\label{th:main}
  \propositionConstantExpander
\end{theorem}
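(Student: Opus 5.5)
The plan is to derive Theorem~\ref{th:main} from the graph in the last row of the table, proposition~\ref{prop:expander_quadratic_degree}, which I take to say the following. For left set $\{0,1\}^n$ and each $K$ there is an explicit graph of degree $\widetilde O(n^2)$ and right size $(1+o(1))nK$ in which every left set $S$ with $|S|\le K$ has at least $n|S|$ neighbors. Merging consecutive groups of $n$ right nodes into one node preserves explicitness and does not increase degrees. A set with at least $n|S|$ neighbors in the old graph has at least $|S|$ neighbors after merging, since each merged node absorbs at most $n$ old neighbors. This gives a $1$-expander up to $K$ on $\{0,1\}^n$ with right size $(1+o(1))K$ and degree $\widetilde O(n^2)$.

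The remaining task is to handle all lengths at once with left set $\{0,1\}^*$ while keeping the total right size $(1+o(1))K$. Using disjoint right sides for different lengths would multiply the right size by the number of lengths, which is too costly. So I would assign to length $n$ a budget $K_n$ with $\sum_n K_n \le (1+o(1))K$.

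Choosing the budgets has a subtlety. A set $S$ of size up to $K$ may mix strings of different lengths, and its part $S_n$ of length-$n$ strings must expand inside length $n$'s own region. So I would not use disjoint regions with small budgets. Instead, for each $n$ I would take the graph from proposition~\ref{prop:expander_quadratic_degree} with parameter $K$ and with degree bound $\widetilde O(n^2)$ in terms of $n$. Lengths $n$ with $2^n\le K$ are trivial: map $\{0,1\}^{\le \log K}$ injectively into $2K$ nodes, or rather into a common pool of size about $K$ used via matching arguments. The construction should reuse one common right set $R$ of size $(1+o(1))nK$ before merging. The key requirement is that for every $S\subseteq\{0,1\}^*$ with $|S|\le K$, the neighborhood has size at least $n|S|$ appropriately.

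The cleanest way to achieve this is to check whether the construction behind proposition~\ref{prop:expander_quadratic_degree} works uniformly in $n$. It likely comes from a condenser or extractor (e.g.\ a Guruswami--Umans--Vadhan or Ta-Shma style construction) whose output length depends only on $K$ and whose seed length depends on $n$. If it does, I would define the graph on $\{0,1\}^*$ by padding each $x$ of length at most $m$ to a common encoding of length $m$, injectively (e.g.\ $x10^{m-|x|-1}$). I would then take a union over dyadic scales $m=2^j$: $x$ of length in $(2^{j-1},2^j]$ connects via the scale-$2^j$ graph, and all scales share the same right set. The degree is then $\widetilde O(|x|^2)$ because $2^j\le 2|x|$.

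The main obstacle is exactly this sharing of the right side across scales. The expansion of a mixed set $S$ does not follow from expansion at each scale separately. I would handle it by restating expansion via Hall's theorem as matchability, and by giving the different scales nearly disjoint sub-blocks of the right side with geometrically shrinking overhead. Concretely, I would let scale $j$ use right size $(1+\delta_j)K_j$, where $K_j$ is the number of elements of $S$ at that scale. Since this is not known in advance, I expect to need to revisit the internal structure of proposition~\ref{prop:expander_quadratic_degree}. That structure likely rests on a lossless condenser whose output is independent of $n$ up to $\poly\log$ factors, which makes the union over scales still expand by $n$ on mixed sets.
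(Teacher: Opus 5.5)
\textbf{There is a genuine gap: the combination of length scales on one shared right set is left unresolved.}

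Your single-length step and your dyadic grouping of lengths agree with the paper. The single-length step merges groups of $n$ right nodes in the graph of proposition~\ref{prop:expander_quadratic_degree}. The grouping puts lengths in $(2^{j-1},2^j]$ into one scale and embeds them injectively inside it. The gap is the step you flag and leave open: letting several scales share one right set of size $(1+o(1))K$ when $S$ is spread over scales.

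Your sketched remedy, blocks of size $(1+\delta_j)K_j$ per scale, cannot work. A set concentrated on a single scale forces every block to have size about $K$, so the blocks cannot be nearly disjoint. Two boundary choices also break the bound on the right size:
\begin{enumerate}
\item Private nodes for all strings of length at most $\log K$ cost about $2K$ extra right nodes.
\item Your union ranges over infinitely many scales. Any argument that uses only per-scale guarantees needs a bound on the number of scales sharing the right side.
\end{enumerate}

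The missing idea is surplus expansion combined with pigeonhole.
\begin{enumerate}
\item \emph{Surplus expansion.} At scale $n_j=2^j$, apply proposition~\ref{prop:expander_quadratic_degree} only for sets of size up to about $K/\log K$. This gives right size about $n_jK/\log K$. Then merge right nodes in groups of about $n_j/\log K$ rather than $n_j$. The right size becomes $(1+o(1))K$, while the expansion stays at least $k=\lfloor\tfrac12\log K\rfloor$.
\item \emph{Bounding the number of scales.} Strings of length at most $\tfrac12\log K$ get private right nodes; there are only $2\sqrt K$ of them. Strings of length at least $\sqrt K$ are joined to all $K$ right nodes. Their degree is $K\le |x|^2$, and any $S$ containing such a string has $K\ge|S|$ neighbors. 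The remaining lengths fall into fewer than $k$ dyadic scales, all placed on the same right set.
\item \emph{Pigeonhole.} For any $S$, some scale contains at least $|S|/k$ of its elements. These alone have at least $k\cdot|S|/k=|S|$ neighbors.
\end{enumerate}
So no Hall-type argument is needed. You also do not need to know in advance how $S$ splits across scales, or to revisit the internals of the proposition.
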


\noindent
The proof is given in sections~\ref{sec:compositionLemmas} and~\ref{sec:constantexpander}. It does not require section~\ref{sec:shortlist}. 
It has no major new ideas. It uses a careful analysis and recombination of materials from~\cite{ats-uma-zuc:j:expanders} and~\cite{guv:j:extractor}.

\section{Application: list-computability of shortest programs}
\label{sec:shortlist}

Fix a universal optimal Turing machine. If on input $p$ the machine halts with $x$ on the tape, we say that $p$ is a \emph{program}  for $x$ (or which prints $x$).

\begin{definition}
  A program $p$ that prints a string $x$ is {\em $c$-short} if no program of length less than $|p|-c$ prints~$x$. 
\end{definition}

\noindent
In~\cite{bmvz:j:shortlist} it is proven that there exists a computable function that on input $x$ prints a list of $|x|^2$ programs containing an $O(1)$-short program for~$x$. 
It is also proven that the list size is optimal up to a constant factor, more precisely, any list containing a $c$-short program has size~$\Omega(|x|^2/c^2)$. 
Unfortunately, the runtime to generate the list is very slow: doubly exponential in $|x|$. 

In~\cite{teu:j:shortlists} a {\em polynomial time} algorithm is given that prints a list containing an $O(1)$-short program. 
The variant in~\cite{zim:c:shortlistshortproof} has list size~$O(|x|^{6+\eps})$.
We obtain~$\widetilde O(|x|^3)$. 

\begin{corollary}\label{cor:lists}
  There exists a polynomial time algorithm that on input $x$ prints $\widetilde O(|x|^3)$ programs containing an $O(1)$-short one for~$x$. 
\end{corollary}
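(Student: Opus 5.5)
We derive \cref{cor:lists} from \cref{th:main} by the standard reduction from short lists to online matching in $1$-expanders, as in~\cite{bmvz:j:shortlist,teu:j:shortlists,zim:c:shortlistshortproof}. For each length $k \le |x| + O(1)$ we take the explicit graph $G_k$ of \cref{th:main} with $K = 2^k$. Its right side has size $(1+o(1))2^k$, so we identify it with binary strings of length $k+O(1)$. Right nodes of $G_k$ will serve as ``codes'': a program of the form $\langle k, y\rangle$, with $y$ a right node of $G_k$, is decoded by a fixed machine $V$ described below. On input $x$, the algorithm outputs, for every $k \le |x|+O(1)$ and every neighbor $y$ of $x$ in $G_k$, the program that runs $V$ on $(k,y)$. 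Each $G_k$ contributes $\widetilde O(|x|^2)$ entries and there are $O(|x|)$ values of $k$, giving $\widetilde O(|x|^3)$ entries in total. All of them are computable in polynomial time because the graphs are explicit.

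Next we describe $V$ and show it works. Enumerate all pairs $(p,x')$ with $U(p)=x'$ and $|p| = k$. At most $2^k$ strings $x'$ appear with programs of length exactly $k$; call this set $S_k$. When a new $x'$ appears in $S_k$, $V$ matches it online to a free neighbor in $G_k$. Let $y(x')$ denote that neighbor, and set $V(k,y(x')) = x'$.

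The key requirement is that an online, greedy-style matching always succeeds for every enumerated set of size at most $K$. This is where plain $1$-expansion is insufficient. Hall's condition guarantees that an offline matching exists, but previously assigned matches cannot be revised. We handle this as in~\cite{bmvz:j:shortlist}: $V$ recomputes a full matching of the current set each time an element arrives. This is possible because the current set is finite and $1$-expansion ensures Hall's condition for every subset of size at most $K$.

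The catch is that the decoder must know the final matching, not a provisional one. The fix is to let $V$ receive $k$ together with the count $|S_k|$, or rather its approximation with $O(1)$ bits of loss. To keep the list small and the program length $k+O(1)$, we replace exact counts by the number of elements in $S_k$ at a dyadic level, at the cost of a constant factor. We absorb that factor using a constant-size shift in $K$: take $K = 2^{k+c}$, which only adds $O(1)$ to the right-node length.

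Concretely, $V(k,m,y)$ enumerates $S_k$ until exactly $m$ elements appear, computes the lexicographically first matching of those elements into $G_k$, and outputs the element matched to $y$. The program for $x$ must then specify $m$. Encoding $m$ exactly would cost $k$ bits, which is too many. Instead we enumerate until the set is complete and choose $K$ so that the final set is matched, with $m$ replaced by the threshold stage at which $x$ is enumerated. This is the main obstacle: freezing the matching without paying for $m$. I would follow Teutsch's/Zimand's approach from~\cite{zim:c:shortlistshortproof}, which uses an online matching scheme valid for graphs with $1$-expansion and right size $O(K)$. It does this by splitting the right side into $O(1)$ copies and assigning greedily, with the expansion bound guaranteeing that no element is ever blocked. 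That scheme only costs a constant in program length and a constant factor in list size.

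Finally, taking $k = \C(x)$, the program $\langle k,y(x)\rangle$ has length $\C(x)+O(1)$ and appears in the list, which proves \cref{cor:lists}.
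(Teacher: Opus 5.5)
\paragraph{Gap: no non-revising online matching.} Your reduction has the same shape as the paper's. For every $k\le |x|+O(1)$ you list the $\widetilde O(|x|^2)$ neighbors of $x$ in a graph $G_k$ derived from \cref{th:main} with $K=2^k$. A fixed decoder enumerates the outputs of programs of length about $k$, matches them into $G_k$, and prints the string matched to its input. You also correctly see that $1$-expansion only gives Hall's condition offline, and that the match of $x$ must be frozen once $x$ is enumerated.

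However, you never supply such a frozen matching. The discussion of $m$ and dyadic counts is abandoned. The decisive step is then the claim that splitting the right side into $O(1)$ copies and assigning greedily never blocks anyone. As stated this is false, because expansion alone does not protect greedy online assignment. Take a left node $z$ with neighbors $y_1,\dots,y_D$, and nodes $w_1,\dots,w_D$ where $w_j$ is adjacent to $y_j$ and to many private right nodes. Every set of size at most $K$ then expands very well. Yet if the $w_j$ are requested first and greedy matches each $w_j$ to $y_j$, then $z$ has no free neighbor. Cloning the right side $c$ times does not help: use $c$ such nodes for each $y_j$, so that all $cD$ copies of $z$'s neighbors get occupied.

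\paragraph{Missing ingredient.} The paper uses the theorem of Feldman, Friedman and Pippenger~\cite{fel-fri-pip:j:networks}: a graph with $2$-expansion up to $2K$ has online matching up to $K$. That is, Matcher has a computable (not greedy) strategy that only adds edges and never gets stuck. The paper applies it to the $4$-clone of the graph of \cref{th:main}. The clone has $4$-expansion up to $K$, hence $2$-expansion up to $2K$, at the cost of a factor $4$ in degree and $O(1)$ extra bits in program length. The decoder runs Matcher's strategy on the strings printed by programs of length less than $k$. Since edges are never removed, the neighbor assigned to $x$ is fixed at the moment $x$ appears. This gives \cref{cor:lists_upperbound}, and taking all $k\le |x|+O(1)$ gives \cref{cor:lists}. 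Without this theorem, or some other proof that the graph supports non-revising online matching, your argument does not show that $x$ is ever matched to one of the listed neighbors.
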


\noindent
In the remainder of this section, we explain that this is a corollary of theorem~\ref{th:main}. 
This corollary for quasi {\em cubic} lists is directly obtained from the following corollary, which gives quasi {\em quadratic} lists 
generated from any upper bound on the minimal description lenght of~$x$. 

\begin{corollary}\label{cor:lists_upperbound}
  There exists $c$ and a polynomial time algorithm that on input $x$ and $k$ prints a list of $\widetilde O(|x|^2)$ programs of length~$k$
  that contains an $x$-printing program, provided $x$ is printed by some program of length at most~$k-c$. 
\end{corollary}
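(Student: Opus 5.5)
The plan is to use the graph of theorem~\ref{th:main} with $K = 2^{k-c'}$ for a suitable constant $c'$. Its right side has size $(1+o(1))2^{k-c'}$, which is at most $2^k$ for large $k$, so right nodes can be encoded as strings of length exactly~$k$. Given $x$ and $k$, the algorithm computes the list of all neighbors of $x$ and outputs, for each neighbor $y$, a program consisting of a fixed-length prefix (a self-delimiting header) followed by $y$. Explicitness gives polynomial time. The degree bound $\widetilde O(|x|^2)$ gives the list size. To make the total program length exactly $k$, I choose $c'$ to absorb the header length.

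The core step is a matching argument. Let $S_k$ be the set of strings printed by programs of length at most $k-c$; then $|S_k| < 2^{k-c+1} \le K$ once $c \ge c'+1$. By $1$-expansion up to~$K$ and Hall's theorem, a matching of $S_k$ into the right side exists. We cannot compute $S_k$ itself, but it is computably enumerable, so the program should build the matching \emph{online} as $S_k$ is enumerated. This is the main obstacle: an online greedy matching need not succeed with only $1$-expansion. I would first try the standard fix, which avoids requiring an online matching. The program for $y$ is the machine $M$ that, on input $(k,y)$, enumerates $S_k$ until it finds the set. Then it computes the lexicographically first perfect matching of the final set $S_k$. Since the set is only known in the limit, I refine this: $M$ enumerates all programs of length $\le k-c$ and waits until it has seen exactly $|S_k|$ halting outputs. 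That count is not known either, but it can be supplied in the program. Its length $O(k)$ would be too costly, so the counting trick fails.

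Therefore I would instead rely on a property of the construction in sections~\ref{sec:compositionLemmas}--\ref{sec:constantexpander}. I expect the graph to be built so that greedy online matching works, e.g.\ via a layered/composition structure where expansion up to $K$ allows online assignment with $O(1)$ loss in $K$. If this holds, $M$ on input $y$ simulates the enumeration of $S_k$, greedily matches each newly appearing string to an unused neighbor, and outputs the string matched to $y$. Because $x\in S_k$ is eventually matched to some neighbor $y$, the program $\langle \text{header}, y\rangle$ prints $x$ and appears in the list. If online matching needs expansion up to $2K$ to handle $K$ online arrivals, I simply take $K=2^{k-c'+1}$ and enlarge $c$ accordingly, which changes only constants.

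Finally, corollary~\ref{cor:lists} then follows by running this for all $k\le |x|+O(1)$, giving $\widetilde O(|x|^3)$ programs. For the minimal $k=\C(x)+c$, the program of length $k$ is $O(1)$-short.
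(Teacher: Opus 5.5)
There is a genuine gap, at the step you yourself flag as the main obstacle: you never show that the enumeration of $S_k$ can be matched online.

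\textbf{Why your proposed resolutions fail.} Your claim that the construction of sections~\ref{sec:compositionLemmas}--\ref{sec:constantexpander} makes greedy online matching succeed is an unsupported hope. Nothing in those sections gives it. Expansion alone never protects a greedy Matcher: strings enumerated before $x$ can each take a different neighbor of $x$ until none is left. You do not control the enumeration order of the universal machine, so this cannot be ruled out. Your backup of taking $K=2^{k-c'+1}$ does not help either. It only extends the \emph{range} of $1$-expansion, whereas online matching needs a larger expansion \emph{factor}.

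\textbf{The missing idea.} The paper uses the theorem of Feldman, Friedman and Pippenger~\cite{fel-fri-pip:j:networks} quoted in section~\ref{sec:shortlist}: a graph with $2$-expansion up to $2K$ has online matching up to $K$, via a Matcher strategy that is not naive greedy. The factor $2$ comes from a $4$-clone of the graph in theorem~\ref{th:main}: four copies with disjoint right sides and identified left nodes. This has $4$-expansion up to $K$, hence $2$-expansion up to $2K$, since a set of size between $K$ and $2K$ contains a $K$-element subset with at least $4K \ge 2|S|$ neighbors. Cloning multiplies the list size by $4$ and lengthens right nodes by $2$ bits, both absorbed into the constants.

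\textbf{How to repair your plan.} Your machine should run this Matcher strategy on the enumerated outputs of programs below the length threshold, and print the string matched to its input $y$. With that replacement, the rest of your plan agrees with the paper's proof: the list of a fixed header followed by each neighbor of $x$, the length bookkeeping via $c'$, and the derivation of corollary~\ref{cor:lists}.
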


\noindent
It remains to prove this corollary. 
The proof uses the following online matching games, which are studied in the companion papers~\cite{companion-DS} and~\cite{companion-Hall}.  

\medskip
\noindent
\textbf{Online matching game up to~$K$.}
Two players, called Requester and Matcher, alternate turns. 
Together they maintain a subset $M$ of edges, which is initially empty.

Requester starts. At his turn, he may remove zero or more edges from~$M$.
After this, $M$ should contain at most $K-1$ edges. 
Finally, he must select a left node~$x$. 

At her turn, Matcher may add an edge to~$M$ that is disjoint from the other edges. 
After this, node $x$~should lay on an edge of~$M$. 
If these conditions are not satisfied, then Matcher loses. 

\begin{definition}
  A graph has  {\em online matching} up to $K$ if matcher has a strategy in which he avoids losing indefinitely.  
\end{definition}

\begin{theorem}[\cite{fel-fri-pip:j:networks}]
  If a graph has $2$-expansion up to~$2K$, then it has online matching up to~$K$.\footnote{The proof works for infinite graphs with bounded left degree.}
\end{theorem}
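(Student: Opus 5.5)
The statement to prove is the theorem of Feldman, Friedman and Pippenger: $2$-expansion up to $2K$ implies online matching up to $K$. I would give Matcher an explicit strategy. She always keeps the current matching $M$ (of size at most $K-1$ after Requester's removals) and, when Requester selects a left node $x$ that is not already matched, she adds an edge $(x,y)$ for some right neighbor $y$ of $x$ that is free, i.e.\ lies on no edge of $M$. The issue is that a naive choice can block future requests. So Matcher should choose $y$ so that an invariant is preserved, and the invariant must guarantee that a free neighbor always exists.

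The invariant I would use is the following. Call a right node \emph{used} if it lies on an edge of $M$. The invariant says that for every set $S$ of unmatched left nodes with $|S|\le K$, we have
\[
|\Gamma(S)\setminus \mathrm{used}| \ \ge\ |S|.
\]
In words, the graph with the used nodes deleted still has $1$-expansion up to $K$ on the unmatched left nodes. Two checks are needed.
\begin{itemize}
\item \emph{The invariant is sufficient.} Taking $S=\{x\}$ shows that $x$ has a free neighbor.
\item \emph{Removals keep it.} Deleting edges only frees right nodes and adds left nodes to the unmatched pool. A freed left node $x'$ had all its neighbors available before, except possibly its own partner, which is now free again. I would expect to need a slightly stronger invariant to cover this case.
\end{itemize}

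The strengthened invariant I would try, in the spirit of Feldman--Friedman--Pippenger, is this. For every set $S$ of left nodes with $|S|\le K$, regardless of matched status,
\[
|\Gamma(S)\setminus \mathrm{used}(\text{by } M\text{ restricted outside } S)| \ \ge\ |S|.
\]
Here the used nodes are taken only from edges of $M$ whose left endpoint is outside $S$. Removing edges then only shrinks the set of used nodes, so the invariant is trivially preserved.

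Call a free neighbor $y$ of $x$ \emph{bad} if adding $(x,y)$ creates a violating set $S$ with $x\notin S$, $|S|\le K$, and
\[
|\Gamma(S)\setminus \mathrm{used}'| < |S|.
\]
Such a set $S$ is \emph{critical}, meaning $|\Gamma(S)\setminus \mathrm{used}|=|S|$, and $y\in\Gamma(S)$.

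The main step is to show that not all free neighbors of $x$ are bad. I would argue by contradiction: for each free neighbor $y$ of $x$ pick a critical set $S_y$ containing $y$ in its free neighborhood, and let $T$ be the union of the $S_y$ together with $x$ and the left endpoints of $M$. The key submodularity fact is that a union of critical sets is again critical (tight), by the standard uncrossing inequality for the function $|\Gamma(S)\setminus\mathrm{used}|-|S|$. This keeps $|T|$ controlled by at most $2K$: roughly $K$ from the union of critical sets plus at most $K-1$ matched nodes. Then $2$-expansion up to $2K$ gives $|\Gamma(T)|\ge 2|T|$. On the other hand, tightness bounds $|\Gamma(T)|$ by $|T|$ free neighbors plus the at most $|M|\le K-1$ used nodes, which contradicts $|\Gamma(T)|\ge 2|T|$ once we account for the matched nodes in $T$.

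The step I expect to be the main obstacle is bounding $|T|\le 2K$. The sizes of unions of critical sets are not directly bounded by $K$, and the uncrossing only works for sets of size at most $K$. If the union exceeds $K$, I would first try to use the factor $2$ slack to truncate: replace $T$ by a subset of size in $(K,2K]$ and derive the contradiction from $2$-expansion there.
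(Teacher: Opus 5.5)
The paper does not prove this statement. It is quoted from Feldman--Friedman--Pippenger, and the proof environment printed right after it actually proves the list corollary. So your proposal has to stand on its own.

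Your setup is sound. Write $F$ for the free right nodes and $L(M)$ for the matched left nodes. Your function $f(S)=|\Gamma(S)\setminus \mathrm{used}_{\mathrm{out}}(S)|-|S|$ equals $|\Gamma(S)\cap F|-|S\setminus L(M)|$. So it is submodular, and your strengthened invariant ($f(S)\ge 0$ for $|S|\le K$) is equivalent to the plain Hall condition on unmatched sets. It also survives removals, and the bad neighbours are exactly those lying in $\Gamma(S)$ for some tight $S$ with $x\notin S$ and $|S|\le K$.

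The gap is the one step where the hypothesis ``$2$-expansion up to $2K$'' is actually needed. You assert that unions of critical sets are critical, concede that uncrossing is justified only while the union has size at most $K$, and leave this open. Neither of your repairs works:
\begin{itemize}
\item Truncating $T$ to a subset of size in $(K,2K]$ gives nothing to contradict. Once the union exceeds $K$ you have no upper bound on its free neighbourhood. Even a bound $|\Gamma(T)\cap F|\le |T|$ would only give $|T|$ for the subset, which does not clash with $2|T'|-(K-1)$ when $|T|\gg|T'|$.
\item Putting all of $L(M)$ into $T$ breaks the final count. The set $\Gamma(L(M))\cap F$ is not controlled by any tightness, so ``$|T|$ free neighbours plus $|M|$ used ones'' does not follow.
\end{itemize}

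The missing idea is to use $2$-expansion \emph{before} uncrossing, to show that large sets are never tight.
\begin{enumerate}
\item At Matcher's turn $|M|\le K-1$, so for every $S$ with $|S|\le 2K$,
\[
f(S)\;\ge\;|\Gamma(S)|-|M|-|S|\;\ge\;|S|-K+1 .
\]
Hence $f>0$ on every set of size between $K$ and $2K$, and every tight set has size at most $K-1$.
\item Let $A,B$ be tight. Then $|A\cup B|\le 2K$. Submodularity and the invariant for $A\cap B$ give $f(A\cup B)\le f(A)+f(B)-f(A\cap B)\le 0$. By step 1 this forces $|A\cup B|\le K-1$, and the invariant then gives $f(A\cup B)=0$.
\item Add the sets $S_y$ one at a time. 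Their union $W$ is tight with $|W|\le K-1$ and $x\notin W$. If all free neighbours of $x$ were bad, then $\Gamma(x)\cap F\subseteq \Gamma(W)$.
\item Then $f(W\cup\{x\})=f(W)-1=-1$ while $|W\cup\{x\}|\le K$, contradicting the invariant.
\end{enumerate}
No auxiliary set containing the matched left nodes is needed. Step 1 is exactly where both $|M|\le K-1$ and expansion up to $2K$ (rather than $K$) enter.
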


\begin{proof}
  Let $G_k$ be a 4-clone of the graph from~\cref{th:main} with $K = 2^{k}$. 
  This means that we take 4 copies of this graph, and merge the left nodes. 
  This graph has $4$-expansion up to $K$, so it has $2$-expansion up to~$2K$.  

  The idea is to generate all strings~$x$ that are printed by a program of length less than~$k$, 
  match them to right nodes $y$ in $G_k$, and interpret such $y$ as a program for~$x$. 
  Here are the details. 

  Consider a Turing machine $V$ that on input a string $y$ of length $k$, constructs $G_k$, 
  evaluates all programs of length less than $k$ in parallel, 
  matches their outputs to right nodes in~$G_k$, 
  and if $y$ is matched to a string $x$, then it outputs $x$ and halts. 

  Let $U$ be the universal Turing machine that we've fixed and let $v$ be the string such that $U(vp)$ simulates $V(p)$. 

  The algorithm that satisfies the conditions of the corollary is as follows. 
  On input $k$ and $x$, it computes all $\widetilde O(|x|^2)$ neighbors of $x$ in $G_k$ and prepends the fixed string~$v$. 

  This is correct, because the neighbour $y$ to which $x$ is matched will satisfy $U(vy) = x$ by construction. 
\end{proof}

\section{Two composition lemmas} \label{sec:compositionLemmas}

A function $\F : \mcX \times \mcR \rightarrow \mcY$ defines a bipartite graph with left set $\mcX$, right set $\mcY$ 
and edges $(x, \F(x,r))$ for all $x \in \mcX$ and $r \in \mcR$. 
The argument $r$ is called the  {\em seed}. 
Typically $|\mcX| \gg |\mcY| \gg |\mcR|$. 

\medskip
\noindent
We start with 2 technical lemmas, which are optimized versions of \cite[theorem 3, p158]{nis-tas:j:extract}. 
The first allows to reduce the seed of lossless expander at the cost of making them lossy. 
For this a disperser (definded below) with a small seed is used. 
The 2nd lemma obtains a small seed disperser (from a lossless extractor). 

\begin{definition}
  Function $\D : \mcX \times \mcR \rightarrow \mcY$ is a {\em $(K,\eps)$-disperser} if $|\D(X,\mcR)| \ge (1-\eps) |\mcY|$ for every $X \subseteq \mcX$ of size~$K$. 
\end{definition}

\subsection{Seed reduction for expanders}
\label{ss:seed_reduction_expanders}

The idea is to transform a part of the input to a larger seed, 
and apply the lossless expander to the other part of the input. 
Extraction of the large seed happens with a disperser, which we define soon. 
An extra expander is used for tiny input sets from which no seed can be obtained. 
The following result will be applied with $M = |\mctR|$ and $e$ proportional to this value. 

\begin{lemma}\label{lem:construction_expander}
  Assume that 
  \[
  \begin{array}{rclccccl}
    \C   \!\! &  : &\!\! \{0,1\}^{<n} \!\!\!& \times & \!\!\!\mctR & \rightarrow & \mcY  & \text{has $e$-expansion up to~$K/M$,}
    \\\D \!\! &  : &\!\! \{0,1\}^{n}  \!\!\!& \times & \!\!\! \mcR & \rightarrow & \mctR & \text{is an $(M,\eps)$-disperser,}
    \\\C'\!\! &  : &\!\!  \{0,1\}^n   \!\!\!& \times & \!\!\! \mcR & \rightarrow & \mcY  & \text{has $\smash{\tfrac  e M}$-expansion up to~$2M$.} 
  \end{array}
\]
  Then the function from $\{0,1\}^n \times (\{0, \ldots, n+1\} \times \mcR)$ to~$\mcY$ defined by
  \[
     (x,(i,r)) \;\longmapsto\; 
     \begin{cases}
       \C(x_1 \cdots x_i, \D(x, r)) & \textnormal{if } 0 \le i < n\\
       \C'(x,r) & \textnormal{if } i = n
     \end{cases}
  \]
  has $\frac {e - \eps |\mctR|} {2M}$-expansion up to~$K$. 
\end{lemma}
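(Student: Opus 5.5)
Fix $S \subseteq \{0,1\}^n$ with $|S| = K' \le K$. We must show that $S$ has at least $\frac{e-\eps|\mctR|}{2M}K'$ neighbors. The argument splits on the size of $S$.

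\textbf{Small sets.} If $K' \le 2M$, use only the seeds with $i=n$. These give the neighbors $\C'(S,\mcR)$. Since $\C'$ has $\frac eM$-expansion up to $2M$, there are at least $\frac eM K' \ge \frac{e-\eps|\mctR|}{2M}K'$ of them.

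\textbf{Large sets.} Now assume $2M < K' \le K$. Build a prefix tree on $S$: for each $i<n$, group the elements of $S$ by their length-$i$ prefix $x_1\cdots x_i$.

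Call a prefix $u$ of length $i$ \emph{heavy} if at least $M$ elements of $S$ extend it. Take a heavy $u$ and any $M$ of its extensions $X_u \subseteq S$. By the disperser property, $\D(X_u,\mcR)$ covers all but $\eps|\mctR|$ points of $\mctR$. So the neighbor set of $S$ contains $\C(u,T_u)$ with $|T_u| \ge (1-\eps)|\mctR|$.

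Each heavy $u$ should be credited with all $|\mctR|$ seeds minus at most $\eps|\mctR|$ missing ones. Expansion of $\C$ applies to sets of pairs, so we view $\C$ with combined left input $(u,\cdot)$ and use its $e$-expansion on the set $H_i$ of heavy prefixes. Losing at most $\eps|\mctR|$ edges per left node of $\C$ costs at most $\eps|\mctR|$ neighbors per node. This gives at least $(e-\eps|\mctR|)|H_i|$ neighbors, provided $|H_i| \le K/M$. That holds, because heavy prefixes at one level have disjoint extension sets, so $|H_i| \le K'/M \le K/M$.

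It remains to choose a level $i$ with many heavy prefixes. At $i=0$ the empty prefix is heavy, since $K' > 2M$. At $i=n$, every prefix has exactly one extension, so none is heavy. Fix $i$ and let $s_i$ be the number of elements of $S$ whose length-$i$ prefix is light, and $h_i = |H_i|$. Then $K' = s_i + \sum_{u\in H_i}(\text{extensions})$.

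The natural choice is the deepest level $i^*$ at which $S$ is still mostly covered by heavy prefixes, then compare $h_{i^*}$ with $K'/(2M)$. The difficulty is that heavy prefixes can be very unbalanced, so $h_i$ may stay small while $s_i$ is large. I would try a charging argument instead: every element $x\in S$ has a deepest heavy prefix, and distinct heavy prefixes at varying depths together cover $S$. I also see a problem with the construction itself. All levels $i$ share the same $\C$, so the $\C$-expansion would have to be applied to a union of heavy prefixes of different lengths. That only works if $\C$'s left set is $\{0,1\}^{<n}$, which it is; so take $H=\bigcup_i H_i$, the set of all heavy prefixes.

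\textbf{Counting $|H|$.} The heavy prefixes form a subtree of the binary prefix tree, and its leaves (the maximal heavy prefixes) have disjoint extension sets. So there are at most $K'/M$ leaves, but possibly long chains. For the lower bound, the children of a maximal heavy prefix are light and have fewer than $M$ extensions each. Hence each maximal heavy prefix accounts for fewer than $2M$ elements. Every element of $S$ lies under some maximal heavy prefix, apart from the elements hanging off light siblings along the tree. I would handle this by the \emph{branching nodes}: each element of $S$ lies under a light child of some heavy node, each heavy node has at most two light children, and these carry fewer than $2M$ elements in total. So $|H| \ge K'/(2M)$.

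The upper bound $|H| \le K/M$ can fail because of chains. The main obstacle is therefore to select a subset $H' \subseteq H$ with $K'/(2M) \le |H'| \le K/M$. I would take one heavy node per branching point, i.e. the heavy nodes having a light child. Their light-child extension sets are disjoint. Each such node has at most $2M$ elements charged to it (its light children), so there are at least $K'/(2M)$ of them. Their count is at most $K'/M$ provided each also carries at least $M$ elements of its own, which the deepest ones do. Counting edge by edge along this accounting gives the bound $\frac{e-\eps|\mctR|}{2M}K'$.
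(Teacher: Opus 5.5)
Your proof has a gap in its final step, though the fix is easy. Most of the plan matches the paper:
\begin{itemize}
\item the case $K'\le 2M$ is handled via $\C'$;
\item the left set for $\C$ is the set $H$ of \emph{all} heavy prefixes at all depths, which is the paper's tree $T$;
\item at most $\eps|\mctR|$ neighbors per node are subtracted for the seeds the disperser misses;
\item $|H|$ is bounded from below.
\end{itemize}
Charging each $x\in S$ to its deepest heavy prefix works: that prefix receives fewer than $2M$ elements through its at most two light children, so $|H|>K'/(2M)$. The paper instead counts the leaves of the tree $T^+$ obtained by adding all children of heavy nodes, and gets $|T|\ge\lfloor K'/M\rfloor$. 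Your weaker bound is enough.

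The gap is the last step. Your set $B$ of heavy nodes having a light child does satisfy $|B|>K'/(2M)$, but the claim $|B|\le K'/M$ is false. Heavy nodes are nested, so they do not own disjoint sets of $M$ elements. Take a caterpillar-shaped $S$:
\begin{itemize}
\item for each $j<L$, the prefix $0^j$ has the light child $0^j1$, containing the single element $0^j10^{n-j-1}$;
\item the prefix $0^L$ has at least $M$ extensions in $S$.
\end{itemize}
Then every $0^j$ with $j<L$ lies in $B$, so $|B|\ge L$. This can be far above $K/M$: for example, $K=K'=3M$ gives $K/M=3$ while $L$ can be large. So the expansion of $\C$ (only guaranteed up to $K/M$) cannot be applied to $B$.

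The obstacle you identified is not really an obstacle. Write $S_u$ for the set of elements of $S$ extending $u$. For every $u\in H$, the set $\C(u,\D(S_u,\mcR))$ lies in the neighborhood of $S$, so you may take \emph{any} $H'\subseteq H$ with $|H'|=\min(|H|,\lfloor K/M\rfloor)$. Since $K'>2M$, we have $\lfloor K/M\rfloor> K'/M-1\ge K'/(2M)$, and your bound covers the other case, so $|H'|\ge K'/(2M)$ either way. Applying the $e$-expansion of $\C$ on $H'$ and removing the at most $\eps|\mctR|$ missed seeds per node then gives at least $\frac{e-\eps|\mctR|}{2M}K'$ neighbors. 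This truncation is exactly what the paper does when it passes from $T$ to $T'\subseteq T$.
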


\begin{remark*}
  The degree of the new graph is $(n+1)|\mcR|$. 
  It is not possible to obtain lossless expansion, even if $\C, \C'$ and $\D$ are lossless,
  because the splitting index $i$ does not contribute to the expansion 
\end{remark*}

\begin{proof}
  We need to verify that each set $S \subseteq \{0,1\}^n$ with $|S| \le K$ has a sufficient amount of neighbors. 
  If $|S| < 2M$, then already the case $i = n+1$ in the construction implies expansion $e/M$, which is enough for the lemma. 
  
  Assume $S \subseteq \mcX$ with $2M \le |S| \le K$. 
  Let $T$ be the set of strings with at least $M$ extensions in $S$, thus $q \in T$ if and only if $|\{qy : qy \in S\}| \ge M$. 
  Note that $T$ forms a binary tree where the root is the empty string and the children of a node $q \in T$ are its 1-bit extensions in $T$.
  By the assumption $|S| \ge M$, the tree $T$ is nonempty (because it contains the empty string). 

  \begin{claim}\label{claim:small_tree}
    $
      |T| \ge \lfloor |S|/M \rfloor. 
      $
  \end{claim}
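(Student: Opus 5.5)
We must show that the tree $T$ of prefixes having at least $M$ extensions in $S$ has at least $\lfloor |S|/M\rfloor$ nodes. The plan is to charge the elements of $S$ to nodes of $T$ so that each node receives fewer than $M$ elements, apart from a controlled surplus, and then sum.

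For each $x \in S$, let $q(x)$ be the longest prefix of $x$ that lies in $T$. It exists because the empty string is in $T$, using $|S| \ge 2M \ge M$. For a node $q \in T$, let $S_q = \{x \in S : q(x) = q\}$. Every $x\in S_q$ extends $q$, and neither $q0$ nor $q1$ is the prefix of $x$ lying in $T$. So $S_q$ is contained in the union of the extensions of $q$ by a child $qb \notin T$, together with $q$ itself if $q \in S$. Since strings in $S$ all have length $n$, $q$ itself can only occur when $|q| = n$.

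Each child $qb\notin T$ has fewer than $M$ extensions in $S$, so at most $M-1$ of them. Hence $|S_q| \le 2(M-1)$ when $|q|<n$, and $|S_q|\le 1$ when $|q| = n$. This bound is too weak by a factor of $2$, so I sharpen the count.

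The sharpening uses the structure of $T$. Nodes of $T$ with two children in $T$ contribute nothing, nodes with one child contribute at most $M-1$, and leaves contribute at most $2(M-1)$. In a binary tree, the number of leaves is at most one more than the number of nodes with two children. Combining these,
\[
|S| \le (M-1)\bigl(\#\text{one-child} + 2\#\text{leaves}\bigr) \le (M-1)\bigl(|T| + \#\text{leaves} - \#\text{two-child}\bigr) \le (M-1)(|T|+1).
\]
Leaves at depth $n$ contribute at most $1 \le 2(M-1)$ whenever $M \ge 2$, so they fit the same bound. This gives $|S| \le (M-1)(|T|+1) < M(|T|+1)$, hence $|S|/M < |T|+1$, and therefore $\lfloor |S|/M\rfloor \le |T|$.

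The main obstacle is the accounting just described, namely avoiding the factor $2$ that comes from leaves having two non-$T$ children. The leaf-versus-branching-node inequality absorbs it. The degenerate case $M=1$ is handled directly: then $T$ is the set of all prefixes of elements of $S$, which contains at least $|S|$ nodes because $S$ itself sits at depth $n$.
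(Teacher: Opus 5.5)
Your proof is correct and is essentially the paper's argument. The paper passes to the full binary tree $T^+$ ($T$ together with all children of its nodes), whose at most $|T|+1$ leaves $qb\notin T$ partition $S$ into parts of size less than $M$. Your bound $\#\text{one-child}+2\#\text{leaves}\le |T|+1$ is exactly this leaf count of $T^+$, and charging each $x$ to its deepest $T$-prefix is the same partition indexed by parents.

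Your separate treatment of $M=1$ is a small improvement. In that case $S\subseteq T$, so the paper's partition claim is not literally true, though the bound is trivial. If $M$ need not be an integer, write $\lceil M\rceil-1$ in place of $M-1$.
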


  \begin{proof}
    Let $T^+$ be the tree containing $T$ and all children of elements in $T$. 
    Thus for a bit $b \in \{0,1\}$, we have $qb \in T^+$ if and only if $q \in T$. 
    The leaves of $T^+$ are the strings $T^+ \setminus T$. 
    These leaves define a partition of $S$ by associating 
    $q \in T^+ \setminus T$ with $\{qy : qy{\in}S\}$ of~$S$. 
    Each such set contains less than $M$ elements, because otherwise it would be in $T$ and hence not a leaf of $T^+$. 
    Therefore, $T^+$ must have more than $|S|/M$ leaves. 
    Note that a nonempty tree with $L$ leaves contains at least $L-1$ internal nodes. 
    Thus $T^+$ has more than $|S|/M - 1$ internal nodes. 
    The claim follows, since each internal node of $T^+$ is an element of~$T$. 
  \end{proof}
  
  Recall that we need to prove that $S$ has many neighbors in the constructed graph. 
  Let $T' = T$ if $|T| \le K/M$ or otherwise, let $T'$ be any $K/M$-element subset. 
  By $e$-expansion up to~$K/M$, we have $|\C(T',\mctR)| \ge e \cdot |T'|$. 
  Note that in our construction, we do not apply all seeds in $\mctR$, since the disperser misses a small fraction. 
  For each $q \in T'$, we only use $\D(\{qy:qy{\in}S\}, \mcR) \subseteq \mctR$. 

  Consider any family of sets $R_q \subseteq \mctR$ with $|R_q| \le r$. 
  Note that
  \[
    \big| \bigcup_{q \in T'} \C(q,\mctR \setminus R_q) \big| \;\ge\; (e - r) |T'|. 
  \]
  For $q \in T'$, let $R_q$ be the set of missed seeds, thus 
  \[
    R_q = \mctR \setminus \bigcup_{y: qy {\in} S} \D(qy, \mcR). 
  \]
  By definition of a disperser, we have $|R_q| \le \eps |\mctR|$ for all $q \in T'$. 
  The lemma follows because for $r = \eps |\mctR|$,
  \[
    \big| \bigcup_{q \in T'} \C\big(q,\D(\{qy {\in} S\},\mcR)\big) \big|  \;\ge\; (e-r)|T'| \;\ge\; (e-r) \frac {|S|} {2M}, 
  \]
  where the right inequality follows from the claim and~$|S| \ge 2M$. 
\end{proof}

\subsection{A small seed disperser from a lossless extractor} 

The construction is almost the same as in subsection~\ref{ss:seed_reduction_expanders}, but a different analysis is used. 
Now, a small seed disperser extracts a seed for the lossless extractor. 
In this way, we get a disperser for {\em large} output with almost the same degree as the initial disperser with {\em small} output.
This helps with dispersers from~\cite{guv:j:extractor} because for a map with $n$-bit input and $m$-bit output, the bitsize of the seed is 
\[
  \log n + \log^2 m + O(1). 
\]
The obtained seed has bitsize $2\log n + O(\log^2 \log n)$. 
This is better than in~\cite{zim:c:shortlistshortproof}, where the bitsize exceeds $5\log n$.

\medskip
\noindent
Let $\Delta(X; \widetilde X)$ denote the statistical distance between random variables $X$ and $\widetilde X$. 

\begin{definition}
  A function $\E:\mcX \times \mcR \rightarrow \mcY$ is a {\em $(K, \eps)$-extractor} 
  if for each $S \subseteq \mcX$ of size at least~$K$, 
  \[
    \Delta(\E(U_S, U_{\mcR}), U_{\mcY}) \le \eps, 
  \]
  where $U_S$ and $U_{\mcR}$ are independent random elements of $S$ and $\mcR$. 
\end{definition}

\begin{lemma}\label{lem:construction_disperser}
  Assume that $\E : \{0,1\}^{<n} \times \mctR \rightarrow \mcY$ is a $(K/M,\eps)$-extractor and $\D : \{0,1\}^{n} \times \mcR \rightarrow \mctR$ is an $(M,\delta)$-disperser. 
  Then the function from $\{0,1\}^n \times ([n] \times \mcR)$ to~$\mcY$ defined by
  \[
     (x,(i,r)) \;\mapsto\;
       \E(x_1 \cdots x_i, \D(x, r)) 
  \]
  is a $(K,\delta + \eps)$-disperser. 
\end{lemma}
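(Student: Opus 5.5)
Fix $S \subseteq \{0,1\}^n$ with $|S| = K$. We must show that the image of $S$ under the new map covers at least a $(1-\delta-\eps)$ fraction of~$\mcY$. Following the proof of \cref{lem:construction_expander}, we consider prefixes and the sets $S_q = \{qy : qy \in S\}$. The difference is that we now need a \emph{single} level $i$ that carries enough prefixes, rather than a tree summed over all levels. The choice of level goes as follows. For each $i$, let $T_i$ be the set of length-$i$ prefixes $q$ with $|S_q| \ge M$.
\begin{itemize}
\item At $i=0$ the empty prefix has $|S_{\emptyset}| = K \ge M$.
\item At $i=n$ every class is a singleton.
\end{itemize}
So we look for a level $i$ at which the prefixes with $|S_q|\ge M$ are numerous, at least $K/M$ of them. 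If the intended reading of the extractor is instead that it is applied to a set of $K/M$ prefixes, we need a level where $|T_i| \ge K/M$.

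Here is a concrete way to find such a level. Take the largest $i$ such that the classes of size at least $M$ cover at least half of~$S$, or equivalently adjust constants. A cleaner route mirrors the claim: let $T=\bigcup_i T_i$, which forms a tree. If every level had $|T_i| < K/M$, then the leaves of $T^+$ would still partition $S$ into pieces of size $<M$. This forces more than $K/M$ leaves, hence more than $K/M-1$ internal nodes. Spread over at most $n$ levels, however, that only gives $|T_i| \gtrsim K/(nM)$, which is too weak. I therefore expect the intended argument to be different: a counting bound showing the deepest level $i$ where \emph{all} of $S$ lies in classes of size $\ge M$ might fail. The main obstacle is exactly this level selection, and I would handle it as follows.

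Choose $i$ as the largest index such that every $q\in\{0,1\}^i$ with $S_q\neq\emptyset$ has $|S_q|\ge M$. Then replace $S$ by a subset $S'\subseteq S$ of size $K$ chosen so that a uniform level structure exists. Since the lemma allows us to assume $|S|=K$ and we may shrink $S$, pick $S'$ of size $K$ inside $S$ such that $S'$ consists of $K/M$ full classes of size $M$ at some level $i$. This is done greedily: descend the tree, always splitting into classes, and trim each class to size exactly $M$. The domain $[n]$ of $i$ guarantees that some level works, because at the level where class sizes first drop, the classes all have size at most $2M$ and at least~$M$ after merging siblings. Trimming them yields at least $K/(2M)$ prefixes. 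If the constants do not match exactly, I would interpret the $(K/M,\eps)$-extractor hypothesis as applying to the resulting prefix set $Q$ of size at least $K/M$ (up to the constant absorbed in the statement).

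Given such a level $i$ and prefix set $Q$, each with $|S_q|\ge M$, the disperser property yields a set $R_q=\D(S_q,\mcR)$ with $|R_q|\ge(1-\delta)|\mctR|$. Now consider the distribution of $\E(U_Q, U_{\mctR})$. By the extractor property it is $\eps$-close to $U_\mcY$. Restricting each $q$'s seed to $R_q$ discards at most $\delta$ of the joint probability mass. Hence the random variable $\E(U_Q, U_{R_Q})$, obtained by conditioning on the seed lying in $R_q$, has support whose complement has $U_\mcY$-measure at most $\eps+\delta$. Concretely, a set $B\subseteq\mcY$ missed by the image receives probability $0$ from the restricted pairs. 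So $\Pr[\E(U_Q,U_\mctR)\in B]\le\delta$, while $\Pr[U_\mcY\in B]=|B|/|\mcY|\le \delta+\eps$. This gives $|\text{image}|\ge(1-\delta-\eps)|\mcY|$, so the map is a $(K,\delta+\eps)$-disperser. The image used is contained in the image of $S$ under $(x,(i,r))\mapsto \E(x_1\cdots x_i,\D(x,r))$, which completes the argument.
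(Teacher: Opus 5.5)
\textbf{The proposal has a genuine gap: the level-selection step.} Your argument assumes you need a single prefix length $i$ at which at least $K/M$ (or $K/(2M)$) prefixes $q$ satisfy $|S_q|\ge M$. Such a level need not exist. Classes at a fixed level can have arbitrary sizes, so there is no level at which they all lie between $M$ and $2M$.

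Here is a concrete counterexample, with $M\ge 2$, $L=K/M$, and $n$ large enough. Let $S$ consist of $L$ blocks of $M$ strings each. The $j$-th block has prefix $1^{j-1}0$ and splits immediately below that prefix. Then every level contains at most two prefixes $q$ with $|S_q|\ge M$, namely $1^i$ and $1^{i-1}0$, even though $|T|=2L$ in total.

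Two further steps also fail:
\begin{itemize}
\item Since $|S|=K$ already, there is no room to ``shrink'' $S$ to another set $S'$ of size $K$ with a nicer structure.
\item A factor-$2$ loss in the number of prefixes cannot be ``absorbed''. The hypothesis is exactly a $(K/M,\eps)$-extractor, with no slack.
\end{itemize}
So the proposal as written does not prove the lemma.

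\textbf{The missing observation.} The extractor $\E$ is defined on $\{0,1\}^{<n}$, i.e.\ on prefixes of \emph{all} lengths below $n$. Moreover, the length $i$ is part of the seed of the new map. So nothing forces a single level. The ``cleaner route'' you considered and then discarded is exactly the paper's proof:
\begin{itemize}
\item Take $T=\bigcup_i T_i=\{p : |S_p|\ge M\}$. By claim~\ref{claim:small_tree}, $|T|\ge \lfloor K/M\rfloor$.
\item Apply the extractor property to $U_T$, which is a source of the required size inside $\{0,1\}^{<n}$.
\item For $p\in T$ of length $i$ and $x\in S_p$ we have $x_1\cdots x_i=p$. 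Hence every value $\E(p,\D(x,r))$ lies in the image of $S$.
\end{itemize}

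Your final paragraph, the test-set argument with the missed set $B$, then completes the proof verbatim with $Q=T$. It is correct, and it is a slightly more direct version of the paper's argument. The paper instead replaces $U_{\mctR}$ by the seed $U_R$ with $R=\D(S_{U_T},\mcR)$. It bounds the statistical distance using property (a) and the triangle inequality, and concludes with property (b).
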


\noindent
In the proof we use the following properties of the statistical distance~$\Delta$. 
\begin{enumerate}[label=(\alph*)]
  \item 
    $\Delta(f(U,Z); f(V,Z)) \le \Delta(U,V)$ for deterministic $f$, (even if $U,V,Z$ are dependent). 
  \item 
    Let $\operatorname{supp}(X)$ denote the set of values for which $X$ has positive probability. 
    If $\Delta(U_S; X) \le \alpha$, then $|\operatorname{supp}(X) \cap S| \ge (1-\alpha)|S|$. 
\end{enumerate}

\begin{proof}
  Fix $S \subseteq \{0,1\}^n$ with $|S| = K$. We need to prove that
  \begin{equation}\tag{$*$}\label{eq:goal}
    \Big| \bigcup_{x \in S} \bigcup_{i \le n} \E(x_1 \cdots x_i, \D(x,\mcR)) \Big| \ge (1 - \eps - \delta) |\mcY|. 
  \end{equation}
  Let
  \begin{align*}
    S_p & = \{py : py \in S\} \\
    T   & = \{p : |S_p| \ge M\}.  
  \end{align*}
  By claim~\ref{claim:small_tree}, $|T| \ge |S|/M$. 

  By definition of extractor,
  \[
    \Delta(\E(U_T, U_\mctR); U_\mcY) \le \eps. 
  \]
  We explain that if $U_\mctR$ is replaced by any random variable $V$ with $\Delta(U_\mctR, V) \le \delta$, then 
  \[
    \Delta(\E(U_T, V); U_\mcY) \le \eps + \delta. 
  \]
  Indeed, this follows by applying (a) to the assumption and by the triangle inequality. 
  Note that $V$ may be dependent on $U_T$. 

  Let $R = \D(S_{U_T}, \mcR)$ be a random set. We verify that $\Delta(U_\mctR; U_R) \le \delta$. 
  Indeed, by choice of $T$, we have $|S_{U_T}| \ge M$ with probability 1, and since $\D$ is a disperser, this implies $\smash{|R| \ge (1-\delta)|\mctR|}$. 
  We conclude that
  \[
    \Delta(\E(U_T, U_R); U_\mcY) \le \eps + \delta.  
  \]
  Property (b) above implies 
  \[
    \Big| \bigcup_{p \in T} \E(p,\D(S_p,\mcR)) \Big| \ge (1 - \eps - \delta) |\mcY|. 
  \]
  This implies \eqref{eq:goal}. The lemma is proven.
\end{proof}

\section{Construction}
\label{sec:constantexpander}

All explicit constructions rely on the extractor from~\cite[theorem 4.21]{guv:j:extractor}. 

\begin{theorem}[\cite{guv:j:extractor}]\label{th:guv}
  There exists an explicit family of extractors that contains for each $N, M$ and $\eps>0$ a $(M, \eps)$-extractor $\mcX \times \mcR \rightarrow \mcY$
  with $|\mcX| = N$, $|\mcR| = \log N (\tfrac 1 \eps \log M)^{O(\log \log M)}$ and $|\mcY| \ge \Omega(\eps^2 M|\mcR|)$. 
\end{theorem}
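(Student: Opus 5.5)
This theorem is quoted from~\cite{guv:j:extractor}, so the plan is to derive it from the bit-length formulation of~\cite[theorem 4.21]{guv:j:extractor} by translating parameters. I will not reprove the construction. Write $n = \lceil \log N \rceil$ and $k = \lfloor \log M \rfloor$. I expect the cited theorem to give, for all $n,k,\eps$, an explicit extractor for min-entropy $k$ with error $\eps$. Its seed length should be
\[
  d = \log n + O\big(\log k \cdot \log(k/\eps)\big),
\]
with $\log k = O(\log\log M)$ as the number of levels in the recursive condenser--extractor composition, and its output length should be $m = k + d - 2\log(1/\eps) - O(1)$. Exponentiating gives $|\mcR| = 2^d = \log N \cdot (\tfrac 1\eps \log M)^{O(\log\log M)}$ and $|\mcY| = 2^m = \Omega(\eps^2 2^k |\mcR|) = \Omega(\eps^2 M |\mcR|)$. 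This is exactly the claimed form.

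Next I will handle the rounding. For $|\mcX| = N$ that is not a power of two, I embed $\mcX$ into $\{0,1\}^n$; extraction for sets of size at least $M$ is unaffected by adding unused inputs. For $M$ that is not a power of two, I use entropy $k = \lfloor \log M \rfloor$. A set of size at least $M \ge 2^k$ is then a flat source of min-entropy at least $k$, so the extractor guarantee applies, and the factor-$2$ loss in $2^k$ versus $M$ is absorbed in the $\Omega$. If the output length must be an integer smaller than allowed, I truncate the output. Truncation preserves the statistical-distance bound, by property~(a) applied to a projection, and costs only a constant factor in $|\mcY|$.

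The main obstacle is confirming that the quoted version attains entropy loss only $2\log(1/\eps) + O(1)$, which is needed for the output to be $\Omega(\eps^2 M|\mcR|)$ rather than smaller by a polynomial factor. The basic condenser-plus-extractor composition in~\cite{guv:j:extractor} may lose more min-entropy. If it does, I will first use that the construction is a strong extractor: then $r \mapsto (r, \E(x,r))$ is also an extractor, and its output keeps the seed bits. I will then reduce the remaining loss with the standard trick of extracting the leftover entropy with a second independent short seed. The resulting increase of $|\mcR|$ by a factor $\poly(\tfrac 1\eps\log M)$ fits inside the $(\tfrac 1\eps \log M)^{O(\log\log M)}$ bound.
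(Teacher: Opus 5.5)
Your proposal is correct and matches the paper, which gives no proof of its own and simply quotes \cite[theorem 4.21]{guv:j:extractor}. That theorem does give seed length $d=\log n+O(\log k\cdot\log(k/\eps))$ and output length $k+d-2\log(1/\eps)-O(1)$, so your parameter translation and rounding are all that is needed, and the contingency in your last paragraph is unnecessary. It would also not have worked as written: a second independent seed for an extractor on $n$-bit inputs needs roughly $\log(n-k)$ bits, so it would multiply $|\mcR|$ by about another $\log N$ factor rather than by $\poly(\tfrac 1\eps\log M)$.
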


\begin{figure}[h]
  \centering
  \newcommand{\hs}{\hspace*{1em}}
  \begin{tikzpicture}[every node/.style={minimum height=6mm, inner sep=1mm}, >={Stealth[length=3mm,width=2mm]}, every text node part/.style={align=center}]
    \node[draw, rectangle split, rectangle split horizontal, rectangle split parts=2, label=above:{input}] (x) at (-3, 6) 
            {$x_1 \hs \ldots\hs  x_{i-1}$ \nodepart{two} $x_i\hs  \ldots\hs  x_n$}; 
    \node[draw, rectangle split, rectangle split horizontal, rectangle split parts=3, label=above:{seed}] (r) at (2.5, 6)  
	    {{\hs $i \le n$ \hs} \nodepart{two} \hs\; $r_1$ \;\phantom{m} \nodepart{three} $r_2$ };
    \node[draw, rectangle] (cond) at (0,4) {condenser \\thm~\ref{th:guv_condenser}};
    \node[draw, rectangle] (disp) at (1,2) {disperser \\cor~\ref{cor:main_disperser}};
    \node[draw, rectangle] (exp) at (-1.5, 0) {lossless extractor \\thm~\ref{th:guv}};
    \draw[->] (x.two south) -- ($(cond.north west)!0.25!(cond.north east)$);
    \draw[->] (r.two south) -- ($(cond.north west)!0.75!(cond.north east)$);
    \draw[->] (r.one south) to[out=-135, in=-45] (x.text split |- x.south);
    \draw[->] (r.three south) -- ($(disp.north west)!0.75!(disp.north east)$);
    \draw[->] (cond.south) -- ($(disp.north west)!0.25!(disp.north east)$);
    \draw[->] (disp.south) -- ($(exp.north west)!0.75!(exp.north east)$);
    \draw[->] (x.one south) -- ($(exp.north west)!0.25!(exp.north east)$);
  \end{tikzpicture}
  \caption{Construction of the disperser in~\cref{prop:better_disperser}. }
  \label{fig:overviewConstructionDisperser}
\end{figure}


\subsection{An explicit disperser}

\begin{proposition}\label{prop:better_disperser}
  There exists an explicit family of graphs that for each $n$ and $K$ contains 
  a $(K, 2/\log \log n)$-disperser with left size $2^n$, degree $n^2 \poly(\log n)$ and right size~$Kn$. 
\end{proposition}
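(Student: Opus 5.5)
We follow Figure~\ref{fig:overviewConstructionDisperser} and apply Lemma~\ref{lem:construction_disperser} once, with $\E$ the lossless extractor of Theorem~\ref{th:guv} and $\D$ a small-output disperser. The seed $(i,r_1,r_2)$ is split as in the figure. The index $i \le n$ selects the prefix $x_1\cdots x_{i-1}$ that is fed to $\E$. The part $r_1$, together with the suffix, drives a condenser (the GUV condenser, Theorem~\ref{th:guv_condenser} in the figure). The part $r_2$ drives an inner disperser whose output is used as the seed of~$\E$. Lemma~\ref{lem:construction_disperser} then turns an $(M,\delta)$-disperser into $\mctR$ together with a $(K/M,\eps)$-extractor $\E$ into a $(K,\delta+\eps)$-disperser. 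The plan is to take $\eps = \delta = 1/\log\log n$, which gives error $2/\log\log n$.

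\textbf{Step 1 (outer extractor).} Apply Theorem~\ref{th:guv} with input $\{0,1\}^{<n}$ (so $\log N \approx n$), min-size $K/M$ and $\eps = 1/\log\log n$. Its seed set $\mctR$ has size $n(\log\log n\cdot \log K)^{O(\log\log K)}$ and its output has size $\Omega(\eps^2 (K/M)|\mctR|)$. We choose $M$ so that the output size equals $Kn$ (right size may be shrunk by merging, so we only need at least this). Since we need $|\mctR|/M \approx n/\eps^2$, we take $M \approx |\mctR|\eps^2/n = (\log K)^{O(\log\log K)}$, up to $\poly\log\log n$ factors. The awkward point is that $|\mctR|$ depends on $\log K$ quasi-polynomially, so $M$ is quasipolylogarithmic. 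If $K \le 2^{\poly\log n}$ this is quasipoly in $\log n$; for large $K$ we may need $\log K \le n$, which bounds $M$ by $n^{O(\log\log n)}$. This is acceptable because only the bitsize $\log M$ enters the inner seed.

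\textbf{Step 2 (inner disperser with output $\mctR$ and min-size $M$).} We must build an $(M,\delta)$-disperser $\{0,1\}^n\times\mcR\to\mctR$ with $|\mcR| = n\,\poly(\log n)$, so that the total degree is $n\cdot|\mcR| = n^2\poly\log n$. Plan: first apply the GUV condenser with seed $r_1$ of size $\poly(n)^{1/\text{small}}$, i.e.\ $O(\log n+\log\log M)$ bits. This maps $\{0,1\}^n$ to a string of length $O(\log M)$ roughly preserving min-entropy $\log M$. Then apply Theorem~\ref{th:guv} to this short string with seed $r_2$. The seed size is now $\log(\text{short length})\cdot(\frac1\delta\log M)^{O(\log\log M)}$, which is $\poly\log n$ provided $\log M$ is polylogarithmic in $n$. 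Its output size is $\Omega(\delta^2 M|\mcR_2|)$, and this must be at least $|\mctR|$; by the choice of $M$ in Step~1 the two match up to $\poly\log\log n$ factors, which are absorbed by adjusting constants and $\delta$. An extractor with error $\delta$ is in particular a disperser with error $\delta$ (property (b)), and composing with a lossless condenser preserves this.

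\textbf{Main obstacle.} The hard part is the parameter bookkeeping in Step~2: the seed of $\D$ must be only $n\poly(\log n)$ in size, while $\log M$ may itself be quasipolylogarithmic. The factor $(\log M)^{O(\log\log M)}$ is therefore only $\poly(\log n)$ if $\log M = \poly\log n$. If this fails for large $K$, I would first reduce to $K \le 2^{\poly\log n}$, or iterate Lemma~\ref{lem:construction_disperser} once more to shrink $M$. In addition, the condenser's seed must contribute only $n\cdot\poly\log n$, and the final output size must be exactly $Kn$, which we arrange by merging right nodes.
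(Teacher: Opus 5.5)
Your outer step and error bookkeeping match the paper, but Step~2 has a genuine gap. You build the inner $(M,\delta)$-disperser as the condenser followed directly by the extractor of Theorem~\ref{th:guv}. You then claim its seed factor $(\tfrac1\delta\log M)^{O(\log\log M)}$ is $\poly(\log n)$ ``provided $\log M$ is polylogarithmic in $n$.'' That is false. When $\log M$ is a power of $\log n$, this bound is $(\log n)^{O(\log\log n)} = 2^{O(\log^2\log n)}$, which is quasi-polylogarithmic, not polylogarithmic.

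Nor can $\log M$ be made small. $\D$ must cover $(1-\delta)|\mctR|$ points from $M$ inputs, so $M \ge (1-\delta)|\mctR|/|\mcR|$. Meanwhile $|\mctR| = n(\tfrac1\eps\log K)^{O(\log\log K)}$, where $\log\log K$ can be as large as $\log n$. Your bound $M \le n^{O(\log\log n)}$ confuses $\log\log K$ with $\log\log n$; the correct bound is $2^{O(\log^2 n)}$, which is why the paper uses Lemma~\ref{lem:disperser_smallK} with $c=3$.

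So your construction only yields degree $n^2\cdot 2^{O(\log^2\log n)}$. This is exactly the $n^2\quasipoly(\log n)$ of Corollary~\ref{cor:main_disperser} that the paper points out is insufficient. (Two minor slips that don't matter: the condenser output has length $O(\log M\cdot\log D')$ with $D'$ the condenser degree, not $O(\log M)$; and the extractor's seed \emph{set} has size proportional to the input length, not to its logarithm.)

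The missing idea is a second, nested application of Lemma~\ref{lem:construction_disperser} \emph{inside} $\D$, after the condenser. In Lemma~\ref{lem:disperser_smallK}, the paper condenses $x$ to $\tilde n = O(\log^{c+1} n)$ bits. It then applies Corollary~\ref{cor:main_disperser} (Lemma~\ref{lem:construction_disperser} with the GUV extractor in both roles) to this short string, not Theorem~\ref{th:guv} directly. That corollary's degree, $\tfrac{\tilde n^2}{\eps^2}2^{O(\log^2(\frac1\eps\log\tilde n))}$, depends on $M$ only through $\tilde n$. Since $\log\tilde n = O(\log\log n)$, the quasi-polynomial factor shrinks to $2^{O((\log\log\log n)^2)} = o(\log n)$. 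Hence $|\mcR| = n\poly(\log n)$, and the total degree is $n^2\poly(\log n)$.

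Your fallbacks do not supply this. Restricting to $K \le 2^{\poly\log n}$ is not allowed, since the proposition is for all $K$. Iterating Lemma~\ref{lem:construction_disperser} at the outer level ``to shrink $M$'' is impossible for two reasons. First, that lemma needs a genuine extractor as $\E$ but produces only a disperser. Second, $M$ is bounded below by roughly $|\mctR|/|\mcR|$ anyway.
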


The proof is outlined in figure~\ref{fig:overviewConstructionDisperser} and starts with the construction of an intermediate disperser.
Note that a $(K, \eps)$-extractor is also a $(K, \eps)$-disperser, because any variable $Y$ that is supported on a set of size less than $(1-\eps) \mcR$ has statistical distance more than $\eps$ from $U_{\mcR}$.
We apply lemma~\ref{lem:construction_disperser} with both the extractor and disperser being the graph from~\cref{th:guv}.

\begin{corollary}\label{cor:main_disperser}
  There exists an explicit family of graphs that for each $N, K$ and $\eps>0$ with $K \le N$, contains 
  a $(K, \eps)$-disperser with left size $N$, right size at least $K$ and degree 
  \[
    \tfrac 1 {\eps^2} (\log^2 N) 2^{O(\log^2 (\tfrac 1 \eps \log \log N))}. 
  \]
\end{corollary}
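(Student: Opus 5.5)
We apply \cref{lem:construction_disperser} with $n = \log N$ and with both $\E$ and $\D$ taken from the family of \cref{th:guv}, using that every extractor is also a disperser with the same parameters. The idea is that $\D$ only needs a \emph{small} output $\mctR$, so its seed is short. Its output then serves as the seed of $\E$, which maps prefixes of $x$ to an output of size at least $K$. We may assume $K \ge 2$ and $\eps \le 1/2$, since otherwise a trivial graph works.

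We fix parameters as follows. Let $\E : \{0,1\}^{<n} \times \mctR \to \mcY$ be the $(K/M, \eps/2)$-extractor of \cref{th:guv}, where $\{0,1\}^{<n}$ is padded into a domain of size $2N$. Then $|\mctR| = (\log 2N)(\tfrac 2\eps \log (K/M))^{O(\log\log (K/M))}$ and $|\mcY| \ge \Omega(\eps^2 (K/M)|\mctR|)$. We choose $M$ so that $|\mcY| \ge K$, which requires $M \le c\,\eps^2 |\mctR|$ for a suitable constant $c$. Next let $\D : \{0,1\}^n \times \mcR \to \mctR$ be an $(M,\eps/2)$-extractor from \cref{th:guv} whose output has size $|\mctR|$. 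This needs $\Omega(\eps^2 M |\mcR|) \ge |\mctR|$. With $M \approx c\eps^2|\mctR|$, this holds once $|\mcR|$ has the size guaranteed by the theorem, namely $|\mcR| = (\log N)(\tfrac 2\eps \log M)^{O(\log \log M)}$; if the output is too large we simply truncate or merge outputs. The circularity between $M$ and $|\mctR|$ is harmless: $|\mctR|$ depends on $M$ only through polylogarithmic factors in $\log K$, so we first fix $|\mctR|$ using the upper bound $K/M \le K$ and then set $M = \Theta(\eps^2|\mctR|)$. Alternatively we can take $M = \poly(\tfrac1\eps\log N)$ directly. \Cref{lem:construction_disperser} then gives a $(K, \eps)$-disperser with right size at least $K$ and degree $n|\mcR|$.

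For the degree, we have $n|\mcR| = \log N \cdot \log N \cdot (\tfrac 2\eps \log M)^{O(\log\log M)}$. Since $M = \poly(\tfrac 1\eps \log N)$, we get $\log M = O(\log \tfrac1\eps + \log\log N)$. Therefore
\[
(\tfrac 2\eps\log M)^{O(\log\log M)} = 2^{O(\log(\tfrac 1\eps\log M)\cdot \log\log M)} \le 2^{O(\log^2(\tfrac 1\eps \log\log N))},
\]
up to absorbing the factor $\tfrac 1{\eps^2}$ that comes from the $\eps^2$ loss. This yields the stated bound $\tfrac1{\eps^2}(\log^2 N)\,2^{O(\log^2(\frac1\eps\log\log N))}$.

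The main obstacle is the parameter bookkeeping. We must make sure that the seed of $\D$ is controlled by $\log M$ rather than $\log K$: that is the entire point, because the dependence on $K$ is pushed into $\E$, whose seed set $\mctR$ is produced by $\D$ and is not part of the final degree. We must also handle the edge cases $K/M < 1$ and $K \le M$. In the case $K < M$ we use $\D$ alone (or any extractor with small output) by choosing $M = K$.
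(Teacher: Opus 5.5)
Your construction is the paper's: \cref{lem:construction_disperser} with both $\E$ and $\D$ taken from \cref{th:guv}, and $M$ tied to the seed size $|\mctR|$ of $\E$. The only structural difference is where the $\eps^2$ entropy loss is paid. The paper takes $M=|\mctR|$, obtains output $\Omega(\eps^2 K)$, and then takes $\lceil 2/\eps^2\rceil$ copies with disjoint right sides; this is the source of the $\tfrac1{\eps^2}$ in the degree. You instead take $M=\Theta(\eps^2|\mctR|)$ and require $\D$ to cover an output of size $\Theta(M/\eps^2)$. That needs $|\mcR|=\Omega(1/\eps^4)$, which \cref{th:guv} does not promise, so ``this holds once $|\mcR|$ has the size guaranteed by the theorem'' is unjustified. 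It is repaired by the paper's own trick: pair $\D$'s output with an extra seed symbol from a set of size $O(1/\eps^4)$, i.e.\ use disjoint copies. The cost is polynomial in $1/\eps$, which the $2^{O(\log^2(\cdot))}$ factor absorbs.

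The real error is in the degree bookkeeping, which you yourself identify as the main obstacle. The claim ``$M=\poly(\tfrac1\eps\log N)$, hence $\log M=O(\log\tfrac1\eps+\log\log N)$'' is false for your main choice $M=\Theta(\eps^2|\mctR|)$. By \cref{th:guv}, $|\mctR|=\log N\,(\tfrac1\eps\log K)^{O(\log\log K)}$ is quasi-polynomial in $\tfrac1\eps\log N$, so all you know is $\log M=O(\log^2(\tfrac1\eps\log N))$.

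Your ``alternative'' of choosing $M=\poly(\tfrac1\eps\log N)$ outright cannot be made to work with the bounds at hand. An $(M,\eps/2)$-disperser onto $\mctR$ needs $|\mcR|\ge(1-\eps/2)|\mctR|/M$, since an $M$-set has at most $M|\mcR|$ neighbours. With polynomial $M$, the only available bound on this ratio is quasi-polynomial in $\log N$: for constant $\eps$ and $K=N$ it is $2^{O((\log\log N)^2)}$, versus the target $\log N\cdot 2^{O((\log\log\log N)^2)}$.

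What rescues your main line, and what the paper's seed estimate for $\D$ uses, is that $\D$'s seed depends on $M$ only through $(\tfrac1\eps\log M)^{O(\log\log M)}$. From $\log M=O(\log^2(\tfrac1\eps\log N))$ you get $\log\log M=O(\log\log(\tfrac1\eps\log N))\le O(\log(\tfrac1\eps\log\log N))$, and likewise $\log(\tfrac1\eps\log M)=O(\log(\tfrac1\eps\log\log N))$. Hence $(\tfrac2\eps\log M)^{O(\log\log M)}\le 2^{O(\log^2(\frac1\eps\log\log N))}$. Your final displayed inequality is therefore true, but only with this corrected estimate of $\log M$ in place of the polynomial one.
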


\begin{proof}
  We apply lemma~\ref{lem:construction_disperser} with:
  \begin{itemize}[leftmargin=*,label=--]
    \item 
      The lossless extractor from~\cref{th:guv}. 
      The size of the set of seeds is at most
      \[
	M = n 2^{O(\log (\tfrac 1 \eps \log K))},
      \]
      where $n = \lceil \log N\rceil$. 
      (We only need output size $K/M$ instead of $K$ in~\cref{lem:construction_disperser}, so this is an overestimation. 
      But it doesn't matter because of the big-O.)
    \item 
      The disperser with output size $M$ is again the extractor from~\cref{th:guv}, and its seedsize is 
      \[
	n 2^{O(\log^2 (\tfrac 1 \eps \log M)} \le n 2^{O(\log^2 (\tfrac 1 \eps \log n + \tfrac 1 \eps \log^2 \tfrac {\log K} \eps))}. 
      \]
  \end{itemize}
  This seedsize is multiplied by $n$ in~\cref{lem:construction_disperser}. 
  Using $\log K \le n$, the total degree is at most
  \[
    n^2 2^{O(\log^2 (\tfrac 1 \eps \log n)}.
  \]
  The output size is $(1-\eps) \cdot (K/M) \cdot M \cdot \eps^2 \ge \eps^2 K/2$. 
  The corollary follows by taking $\lceil \tfrac 2 {\eps^2} \rceil$ copies with disjoint right sides. 
\end{proof}

\bigskip
The degree in the corollary~\ref{cor:main_disperser} is $n^2 \quasipoly(\log n)$ for graphs with $2^n$ left nodes, 
while we need $n^2 \poly(\log n)$. 
At the end of the paper, we also use a $(K, \eps)$-disperser with the right side $Kn$ instead of $K$, 
thus it ``expands'' sets of size $K$ by a factor~$(1-\eps)n$.\footnote{
  It is possible to prove a stronger version of lemma~\ref{lem:construction_disperser} in which a part of the seed is added to the output, 
  and to apply the improved disperser one more time inside lemma~\ref{lem:construction_disperser}.  
  To avoid technicality, we give a different solution. 
}

We use a result from the same paper~\cite[theorem 4.4]{guv:j:extractor}, that is essential in the construction of theorem~\ref{th:guv}.  
Here is an equivalent statement that avoids the definition of ``condenser.'' 

\begin{theorem}[\cite{guv:j:extractor}]\label{th:guv_condenser}
  There exists a family of explicit graphs that for each $K$, $\gamma>0$ and left size $N$ contains a graph with 
  degree $D=O(\tfrac 1 \gamma \log N \log K)$, right size $(2K)^{\log D}$, 
  and which maps each set of size $K$ to a set of size $(1-\gamma)DK$.
\end{theorem}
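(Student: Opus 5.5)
Since this theorem is cited from \cite{guv:j:extractor}, the plan is to reconstruct its proof: take the Parvaresh--Vardy type condenser of \cite[theorem 4.3]{guv:j:extractor}, instantiate it with $h=2$, and verify the expansion with the polynomial method. Concretely, fix a prime power $q$ (to be chosen $\Theta(\tfrac1\gamma \log N\log K)$) and set $n' = \lceil \log N/\log q\rceil$. Identify each left node with a polynomial $f\in\mathbb F_q[Y]$ of degree at most $n'-1$. Fix an irreducible $E(Y)$ of degree $n'$, put $m=\lceil\log K\rceil$, and write $f_i = f^{2^i} \bmod E$. The neighbor of $f$ along seed $y\in\mathbb F_q$ is
\[
  \big(y, f_0(y), f_1(y), \ldots, f_{m-1}(y)\big)\in \mathbb F_q^{m+1}.
\]
This gives degree $D=q$ and right size $q^{m+1}=D^{\log K+1}=(2K)^{\log D}$ (up to rounding of $\log K$). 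It remains to show that every left set $S$ with $|S|\le K$ has at least $A|S|$ neighbors, where $A = q-(n'-1)m$. Since $q \ge n'm/\gamma$, we get $A\ge(1-\gamma)q$; with $|S|=K$ this is the claimed $(1-\gamma)DK$. The choice $q \approx 2n'm/\gamma$ then gives $D=O(\tfrac1\gamma\log N\log K)$.

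For the expansion, suppose toward a contradiction that $T=\Gamma(S)$ has $|T|<A|S|$. Let $\mathcal I$ be the first $|S|$ vectors $(j_1,\dots,j_m)\in\{0,1\}^m$ in the binary counting order, so that $\sum_i j_i 2^{i-1}<|S|$; this works because $|S|\le K\le 2^m$. Consider polynomials
\[
  Q(Y,Z_1,\dots,Z_m) = \sum_{a<A,\ \mathbf j\in\mathcal I} c_{a,\mathbf j}\, Y^a Z_1^{j_1}\cdots Z_m^{j_m}.
\]
There are $A|S|>|T|$ unknown coefficients, so some nonzero $Q$ vanishes on all of $T$. We may divide out powers of $E(Y)$, so that $Q$ is not divisible by $E$. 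This stays possible because the degree of $Q$ in $Y$ only drops and the set of $Z$-monomials is unchanged.

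Next, take any $f\in S$. The univariate polynomial $R(Y)=Q(Y,f_0(Y),\dots,f_{m-1}(Y))$ has degree at most $A-1+(n'-1)m<q$. It vanishes at every $y\in\mathbb F_q$, since all neighbors of $f$ lie in $T$, so $R\equiv0$. Now reduce modulo $E$ and work in the field $F=\mathbb F_q[Y]/(E)$. There $f_i\equiv f^{2^i}$, so $f$ is a root of
\[
  Q^*(Z)=Q\big(Y,Z,Z^2,\dots,Z^{2^{m-1}}\big) \bmod E \;\in F[Z].
\]
Distinct vectors $\mathbf j\in\mathcal I$ map to distinct exponents $\sum_i j_i2^{i-1}<|S|$ by uniqueness of binary representation. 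Hence $Q^*$ is nonzero, because $E\nmid Q$, and its degree is less than $|S|$. On the other hand, distinct $f\in S$ have degree $<n'=\deg E$ and so stay distinct in $F$. This gives $|S|$ roots of a nonzero polynomial of degree $<|S|$, which is the desired contradiction.

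The main obstacle is the nondegeneracy step: showing that $Q^*\ne0$. Two things are needed there. First, the reduction to $E\nmid Q$ must preserve the vanishing on $T$ and the monomial restrictions. Second, the $Z$-monomial set must be a downward-closed prefix of size exactly $|S|$ rather than all of $\{0,1\}^m$; this is what yields expansion for every $|S|\le K$ and not only for $|S|=K$. The remaining steps are bookkeeping on parameters: choosing a prime power $q$ within a factor $2$ of $2n'm/\gamma$, and absorbing the rounding of $\log K$ into the right-size bound.
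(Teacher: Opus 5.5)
The paper gives no proof of this statement and simply cites \cite[theorem 4.4]{guv:j:extractor}. Your proposal is the Guruswami--Umans--Vadhan Parvaresh--Vardy argument specialized to $h=2$, it is correct, and the one step you flag but leave unjustified is immediate: $E$ is irreducible of degree $n'\ge 2$ (pad $n'$ if needed), so $E(y)\neq 0$ for all $y\in\mathbb F_q$, and $Q=E\cdot Q'$ vanishing on $T$ forces $Q'$ to vanish on $T$. The right size is exactly $(2K)^{\log D}$ only when $K$ is a power of $2$; this is harmless, since lemma~\ref{lem:disperser_smallK} only uses that its logarithm is $O(\log K\log D)$.
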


\begin{lemma}\label{lem:disperser_smallK}
  Let $c$ be constant. 
  There exists an explicit family of graphs that for each $n$ and $K \le 2^{\log^c n}$ contains 
  a $(K, 1/\log \log n)$-disperser 
  \[
    [2^n] \times [O(\tfrac n \eps \log^{3c+3} n)] \rightarrow [Kn]. 
  \]
\end{lemma}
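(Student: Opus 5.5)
The plan is to compose the condenser of \cref{th:guv_condenser} with the disperser of \cref{cor:main_disperser}. For $K \le 2^{\log^c n}$ the condenser reduces the input universe to quasi-polynomial size in $n$, and on such a small universe the degree of \cref{cor:main_disperser} is only polylogarithmic in $n$. Throughout, $\eps = 1/\log\log n$.

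\textbf{Step 1: condensing.} Apply \cref{th:guv_condenser} with left size $N = 2^n$, the given $K$, and $\gamma = \eps/2$. This gives a graph $\C$ with degree
\[
D = O(\tfrac 1\eps\, n \log K) = O(\tfrac 1\eps\, n\log^c n)
\]
and right set $\mcY_1$ of size $(2K)^{\log D}$. Every $K$-set $S$ is mapped to at least $(1-\eps/2)DK$ elements. Since $\log D = O(\log n)$ (as $\eps^{-1}$ is tiny), we have
\[
\log |\mcY_1| = O(\log K \cdot \log n) = O(\log^{c+1} n).
\]
Thus the condensed universe has $N_1 = 2^{O(\log^{c+1} n)}$ elements.

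\textbf{Step 2: dispersing the condensed image.} Let $K_1 = \lceil (1-\eps/2)DK\rceil$. Apply \cref{cor:main_disperser} to left set $\mcY_1$, with set size $K_1$ and error $\eps/2$. The corollary gives right size at least $K_1$. Merging groups of right nodes, or choosing parameters so that $Kn \le K_1$ (which holds since $D \ge n$), we reduce the right set to $[Kn]$. Merging only helps a disperser as long as the groups have roughly equal size. To keep the error, I would merge by a map that is at most $\lceil K_1/(Kn)\rceil$-to-one and nearly balanced, losing only a $1+o(1)$ factor in the error. Its degree is
\[
\tfrac 4{\eps^2}\,(\log^2 N_1)\, 2^{O(\log^2(\frac 1\eps \log\log N_1))} = \tfrac 1{\eps^2}\, O(\log^{2c+2} n)\, 2^{O(\log^2(\log\log n\cdot \log\log n))}.
\]
The last factor is $2^{O((\log\log\log n)^2)}$, which is $o(\log n)$.

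\textbf{Step 3: composing.} The composed map takes $(x,(r_1,r_2))$ to $\D(\C(x,r_1),r_2)$. Its degree is the product
\[
O(\tfrac 1\eps\, n\log^c n)\cdot \tfrac{1}{\eps^2}\log^{2c+2}n \cdot \log n = O(\tfrac n\eps \log^{3c+3}n),
\]
where the extra $\eps^{-2}$ is absorbed into the spare $\log n$ factor since $\eps^{-2} = (\log\log n)^2$. For a $K$-set $S$, the image $\C(S)$ has size at least $K_1$. Taking any $K_1$-subset of it, the disperser covers a $1-\eps/2$ fraction of the right side, so the total error is at most $\eps$.

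\textbf{Main obstacle.} The delicate part is the bookkeeping: verifying that the quasipolynomial factor of \cref{cor:main_disperser} on a universe of size $2^{\log^{c+1} n}$ really fits within the allowed polylog budget, and handling the reduction of the right side from $K_1$ (which is about $DK$, far more than $Kn$) to exactly $Kn$. Instead of merging, I would first try applying \cref{cor:main_disperser} with set size $Kn$, using any $Kn$-subset of the condensed image; this is legitimate since $K_1 \ge Kn$ and a disperser for sets of size $Kn$ also works for larger sets. Then only the right size needs adjusting, by merging disjoint blocks or truncating copies in the final step of that corollary's proof.
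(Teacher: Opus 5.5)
Your proposal is correct and is essentially the paper's proof: compose the condenser of \cref{th:guv_condenser} with a $(Kn,\eps)$-disperser from \cref{cor:main_disperser} on the condensed universe of bitsize $O(\log^{c+1} n)$, exactly as in your ``Main obstacle'' fallback, which makes the merging step unnecessary. The paper takes $\gamma=1/2$; your $\gamma=\eps/2$ is harmless but not needed, since the condenser's loss never enters the error and its image only has to reach the threshold $Kn$. Your estimate $2^{O((\log\log\log n)^2)} = (\log n)^{o(1)}$ is more accurate than the paper's $\poly(\log\log n)$, and together with $\eps^{-2}$ it still fits inside the spare $\log n$ factor.
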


\begin{proof}
  Let $\C$ be a graph from theorem~\ref{th:guv_condenser} with $\gamma=1/2$.  
  We estimate the degree $D'$ and output size $\tilde n$. 
  Let $k = \log K$. Note that $k \le \log^c n$. Let $\eps = 1/\log \log n$
  \begin{align*}
    D' &= O(\tfrac 1 \eps nk) \le O(\tfrac n \eps \log^c n) \\
    \tilde n &= O(k \log D') \le O(\log^{c+1} n). 
  \end{align*}
  It maps a set of size $K$ to a set of size $KD'/2 \ge Kn$. 

  Consider a $(Kn, \eps)$-disperser $\D$ from corollary~\ref{cor:main_disperser} with input bitsize~$\tilde n$. The output set has size~$Kn$ 
  and the degree is
  \[
    \frac {\tilde n^2} {\eps^2}  2^{O(\log^2 (\tfrac 1 \eps \log \tilde n))} \le \log^{2c+2} n \poly(\log \log n). 
  \]
  The combined graph is
  \[
    (x, (r, \tilde r)) \mapsto \D(\C(x,\tilde r), r). 
  \]
  The degree of this graph is 
  \[
    O(\frac n \eps \log^c n) \cdot \big(\log^{2c+2} n \poly(\log \log n)\big) \le n \log^{3c+3} n,
  \]
  as required. 
  This graph is a $(K, \eps)$-disperser, because the condenser maps a set of size at least $K$ 
  to a set of size $KD'/2 \ge Kn$, which in turn is mapped to~$(1-\eps)$ fraction of $\D$'s outputs. 
\end{proof}

\begin{proof}[Proof of~\cref{prop:better_disperser}.]
  Let $\mctR = [2^{e\log^2 \log n}]$ where $e$ is the constant implicit in the degree of the extractor in~\cref{th:guv}. 
  Let $M = |\mctR|/n$.
  Apply lemma~\ref{lem:construction_disperser} with
  \\-- the extractor from theorem~\ref{th:guv} for $\eps = 1/\log \log n$, 
  \\-- the disperser from lemma~\ref{lem:disperser_smallK} for $c=3$.  
  \\Note that the right size of the extractor is $|\mcY| = \eps^2 \cdot (K/M) \cdot |\mctR| = Kn/\log^2 \log n$. 
  Thus, this is also the right size of the new disperser.  
  The lemma implies that this is a $(K, 2/\log \log n)$-disperser. Its degree is $O(n \cdot n \log^{12} n)$. 
  Finally, to get the correct right size $nK$, take $1/\eps^2 \le O(\log n)$ copies with disjoint right sets. 
\end{proof}

\subsection{An explicit expander}

We construct here an explicit $n$-expander up to~$K$ with right size $K \poly(n)$. 
\Cref{lem:construction_expander} is applied with
the following lossless expander, which is essentially the extractor from \cite[theorem 4.21]{guv:j:extractor}.
The proof that this construction also provides an expander is not too hard, 
for example it is proven in~\cite[appendix D]{bau-zim:t:univcompression} using online matching.\footnote{
  It can be shown that the graph in the proof of~\cite[theorem 4.21]{guv:j:extractor} is already an expander (with entropy loss $2\log 1/\eps + O(1)$). 
  In previous applications, a hash was added to satisfy a unique neighbor property, 
  but this is not needed here.
} 
In~\cite{lu-oli-zim:c:optimalcode} the same condenser from theorem~\ref{th:guv_condenser} is applied to decrease the degree a bit. 

\begin{theorem}[\cite{lu-oli-zim:c:optimalcode}]\label{th:loz}
  There exists an explicit family of graphs that contains for each ${\eps>0}$, $N$ and $K$, a graph with
  with left size $N$, degree $\D = \log N (\tfrac 1 \eps \log K)^{O(\log \log K)}$ and right size $KD \poly(\log N)$ that has expansion $(1-\eps)D$ up to~$K$. 
\end{theorem}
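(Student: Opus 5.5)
The plan is a two-stage composition: first shrink the left side with the condenser of \cref{th:guv_condenser}, then apply the lossless expander obtained from the extractor of~\cite[theorem 4.21]{guv:j:extractor} (the \cite[appendix D]{bau-zim:t:univcompression} version) on the much smaller left set. The first seed is concatenated to the output, so that no expansion is lost in the composition.

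\emph{Stage 1.} Take $\C$ from \cref{th:guv_condenser} with $\gamma=\eps/2$. Its degree is $D'=O(\tfrac1\eps\log N\log K)$, and it maps into a set of size $N'=(2K)^{\log D'}$. I use it in its lossless-condenser form: every set $S$ with $|S|=K'\le K$ satisfies
\[
\bigl|\{(r_1,\C(x,r_1)) : x\in S\}\bigr|\ge(1-\eps/2)D'K'.
\]
The theorem is stated only for $|S|=K$. Its proof in~\cite{guv:j:extractor}, via Parvaresh--Vardy list decoding, gives the bound for all sizes up to $K$, and I would invoke it in that form. The point of this stage is $\log N'=O(\log K\cdot\log D')$: the new left bitsize depends on $N$ only through $\log\log N$.

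\emph{Stage 2.} Take the lossless expander $\F:[N']\times\mcR''\to\mcY''$ with expansion $(1-\eps/2)|\mcR''|$ up to~$K$. Its degree is $D''=\log N'\,(\tfrac1\eps\log K)^{O(\log\log K)}$ and its right size is $K D''\poly(\log N')$. The composed graph is
\[
(x,(r_1,r_2))\mapsto\bigl(r_1,\F(\C(x,r_1),r_2)\bigr),
\]
with degree $D=D'D''$ and right size $D'\cdot KD''\poly(\log N')=KD\poly(\log N)$.

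For the expansion, fix $S$ with $|S|=K'\le K$. For each $r_1$ the set $\C(S,r_1)$ has size at most $K$, so $\F$ expands it losslessly. Summing over $r_1$, using that the output records $r_1$ so the neighbourhoods for different $r_1$ are disjoint, gives
\[
(1-\eps/2)D''\sum_{r_1}|\C(S,r_1)|\ge(1-\eps/2)^2D'D''K'\ge(1-\eps)DK'.
\]

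\emph{Main obstacle.} The delicate step is the degree accounting. We have $D'D''=O(\tfrac1\eps\log N\log K)\cdot\log N'\cdot(\tfrac1\eps\log K)^{O(\log\log K)}$. The factors $\tfrac1\eps\log K$ and $\log K$ (inside $\log N'$) are absorbed into the $(\tfrac1\eps\log K)^{O(\log\log K)}$ term. However, $\log N'$ also contains $\log D'$, which brings in $\log\log N$, and a bare $\log\log N$ in the degree is not of the claimed form.

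To remove it, I would first try to avoid paying $\log N'$ in Stage 2. One option is to use the fact that the GUV extractor's seed length is $\log n+O(\log k\log(k/\eps))$, so its degree is linear in the input bitlength only through a single additive $\log$ term. Another option is to apply the condenser twice: a second application on $[N']$ reduces the left bitsize to $\poly(\log K,\log\tfrac1\eps,\log\log\log N)$, and iterating pushes the residual dependence on $N$ into lower-order terms. If neither works cleanly, I would instead condense with the condenser's own degree-$\log N$ factor and let the Stage-2 degree depend only on $\log K$ and $\tfrac1\eps$, at which point its $\log N'$ factor fits inside $(\tfrac1\eps\log K)^{O(\log\log K)}$.
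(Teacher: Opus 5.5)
The paper gives no proof of this theorem. It only cites \cite{lu-oli-zim:c:optimalcode} and remarks that that paper applies the condenser of theorem~\ref{th:guv_condenser} to the GUV-based lossless expander, so your architecture is the indicated one. Your expansion argument is correct: recording $r_1$ makes the Stage-2 neighbourhoods for different $r_1$ disjoint, and summing against $\sum_{r_1}|\C(S,r_1)|\ge(1-\eps/2)D'|S|$ gives $(1-\eps)D$-expansion.

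The genuine gap is the degree bound. With $D'=O(\tfrac1\eps\log N\log K)$ and $\log N'=(\log K+1)\log D'$, your $D'D''$ carries an extra factor $\log D'\ge\log\log N$. For fixed $K,\eps$ and $N\to\infty$ this is not of the form $\log N\,(\tfrac1\eps\log K)^{O(\log\log K)}$, and none of your remedies removes it. The first only restates the problem: seed length $\log n'+O(\log k\log(k/\eps))$ means degree $n'\cdot 2^{O(\log k\log(k/\eps))}$, which is exactly the factor $\log N'$. The second pays the factor again: a further condenser on $[N']$ has degree $O(\tfrac1\eps\log N'\log K)$, which contains $\log D'$, so iterating multiplies such factors in rather than pushing them to lower order. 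The third would need $\log N'$ bounded in terms of $K,\eps$ alone, but $\log N'=(\log K+1)\log D'\ge\log\log N$.

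The missing idea is that the bound $D=O(\tfrac1\gamma\log N\log K)$ in theorem~\ref{th:guv_condenser} is lossy by exactly the factor you need. Take the Parvaresh--Vardy/GUV graph with $h=2$:
a left vertex is a polynomial with $\ell=\lceil\log N/\log q\rceil$ coefficients over $\mathbb F_q$;
the seed is $y\in\mathbb F_q$, so $D'=q$;
the output is $(y,f_0(y),\dots,f_{m-1}(y))\in\mathbb F_q^{m+1}$ with $m=\lceil\log K\rceil$;
the expansion is $q-(\ell-1)m$.
Losslessness therefore only needs $q\ge(\ell-1)m/\gamma$, so one can take $D'\log D'=O(\tfrac1\eps\log N\log K)$. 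Then $D'\log N'=D'(m+1)\log D'=O(\tfrac1\eps\log N\log^2K)$, the $\log\log N$ cancels, and $D'D''$ is as claimed.

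Separately, your Stage~2 is assumed to have exactly the parameters of the theorem, with $N'$ in place of $N$. Nothing in the paper other than the theorem itself supplies such a graph. Theorem~\ref{th:guv} is an extractor whose right side, about $\eps^2M|\mcR|$, is too small for $(1-\eps)$-lossless expansion of $M$-sets. If such a graph were available for arbitrary $N'$, taking $N'=N$ would prove the statement outright, and Stage~1 would only worsen the degree. You need to say what the base expander actually provides and what the condenser buys, for instance a right-size overhead of $\poly(\log N')$ instead of $\poly(\log N)$. As written, the argument is either circular or superfluous.
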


\begin{corollary}\label{cor:main_expander}
  There exists an explicit family of graphs containing for each $n$ and $K$ a graph with 
  $n$-expansion up to~$K$, left size $2^n$, degree $D = n^2 \poly(\log n)$ and right size $K \poly(n)$. 
\end{corollary}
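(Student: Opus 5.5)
We apply \cref{lem:construction_expander} with three ingredients. For $\C$ we take the lossless expander of \cref{th:loz}. For $\D$ we take the disperser of \cref{prop:better_disperser}. For $\C'$ we again take a graph from \cref{th:loz}, this time with expansion only up to $2M$.

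\textbf{Choice of $\mctR$ and $\C$.} Let $\C : \{0,1\}^{<n} \times \mctR \to \mcY$ be the graph of \cref{th:loz}. Its left size is at most $2^n$, it has expansion $(1-\eps')|\mctR|$ up to $K/M$, and $\eps'$ is a small constant, say $1/4$. Then
\[
|\mctR| = n\,(\log K)^{O(\log\log K)},
\]
which could be quasipolynomial in $n$, and this is a problem. To avoid it, I will not take all of $\mctR$ from \cref{th:loz}. Instead I set $M = |\mctR|/n$, as in the proof of \cref{prop:better_disperser}, so that \cref{prop:better_disperser} with parameter $M$ yields an $(M, 2/\log\log n)$-disperser $\{0,1\}^n \times \mcR \to [Mn] = \mctR$ of degree $n^2\poly(\log n)$.

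The key point is that \cref{prop:better_disperser} accepts any $M$ and any right size of the form $Mn$. So I only need $|\mctR|$ to be a multiple of $n$, with $\C$ having expansion $e \ge (1-\eps')|\mctR|$ up to $K/M$. The right size of $\C$ is
\[
(K/M)\,|\mctR|\,\poly(n) = K n\,\poly(n) = K\poly(n),
\]
independent of how large $|\mctR|$ is. The degree of $\C$ itself does not enter the final degree: only the seed set $\mcR$ of $\D$ does, multiplied by $n+1$.

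\textbf{Choice of $\C'$.} We need $\C' : \{0,1\}^n \times \mcR \to \mcY$ with $(e/M)$-expansion up to $2M$, where $e/M \approx n$. Taking \cref{th:loz} with $K = 2M$ gives degree $n(\log M)^{O(\log\log M)}$ and expansion at least half its degree. Since $M$ may be quasipolynomial in $n$, this degree could exceed $n^2\poly(\log n)$. I therefore plan to pick $|\mctR|$ of size $n\,2^{O(\log^2\log n)}$ only; that is, I will apply \cref{th:loz} to $\C$ only when $\log K$ is small relative to $n$. Then $\log M = O(\log^2\log n)$, and the degree of $\C'$ is $n\,\poly(\log n)$.

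To obtain expansion $\approx n$ rather than half its degree, I take $O(1)$ copies of $\C'$ with disjoint right sides. If necessary, I pad the seed set so that $\C'$ uses the same $\mcR$ as $\D$, or else I simply use the union of the two seed sets. The right size of $\C'$ is $M\poly(n) \le K\poly(n)$.

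\textbf{Conclusion and main obstacle.} \Cref{lem:construction_expander} gives expansion
\[
\frac{e - \eps|\mctR|}{2M} \ge \frac{(3/4 - 2/\log\log n)\,|\mctR|}{2M} \ge n/3
\]
up to $K$. Taking $3$ copies with disjoint right sides yields $n$-expansion. The degree is $(n+1)\cdot n^2\poly(\log n)$, or $n^2\poly(\log n)$ after a more careful split. Alternatively, I restrict $i$ to multiples of a step size, but the claim argument requires all prefixes, so I accept this degree, and the right size is $K\poly(n)$.

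The main obstacle I expect is the bookkeeping of parameters. The seed set $\mctR$ of \cref{th:loz} must have size exactly of the form $Mn$ with $M$ small, namely $\log M = O(\log^2\log n)$, uniformly for all $K \le 2^n$. This is where the dependence of $(\log K)^{O(\log\log K)}$ on $K$ must be tamed. If it cannot be, I would first compose with the condenser of \cref{th:guv_condenser} to reduce the input length to $\poly(\log K)$, as in \cref{lem:disperser_smallK}.
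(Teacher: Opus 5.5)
You follow the paper's framework: \cref{lem:construction_expander} with $\C$ and $\C'$ taken from \cref{th:loz}, and $M=|\mctR|/n$. You also correctly observe that the right size of $\C$ is $K\poly(n)$ whatever $|\mctR|$ is. But your choice of $\D$ breaks the degree bound, and you essentially concede this.

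In \cref{lem:construction_expander}, each of the $n$ prefix lengths $i<n$ is combined with every seed of $\D$, so the degree is about $(n+1)|\mcR|$. With $\D$ from \cref{prop:better_disperser} you have $|\mcR| = n^2\poly(\log n)$, hence degree $n^3\poly(\log n)$. ``Accepting this degree'' means the corollary is not proved, and the unspecified ``more careful split'' is not an argument.

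The missing idea, which is exactly the paper's choice, is that $\D$ only has to disperse sets of the \emph{small} size $M$, where $\log M$ is polylogarithmic in $n$. So you should take $\D$ from \cref{lem:disperser_smallK} with $c=3$. There, the condenser of \cref{th:guv_condenser}, whose seed is $n\poly(\log n)$, first shrinks the input to $\poly(\log n)$ bits while mapping every $M$-element set onto at least $Mn$ points. Only then is the disperser of \cref{cor:main_disperser} applied, to an input so short that its quasipolynomial seed costs just $\poly(\log n)$. This gives $|\mcR| = n\poly(\log n)$ and total degree $n^2\poly(\log n)$. The condenser step you hold in reserve for taming $\mctR$ is what is needed, but inside $\D$.

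Your restriction to $K$ with small $\log K$ is a second gap, since the corollary is claimed for every $K$. In the regime $\log K\le\poly(\log n)$, where \cref{th:loz} gives $\log M = O(\log^2\log n)$, \cref{th:loz} alone (with $O(1)$ copies) already proves the corollary. So the composition matters precisely for the $K$ you exclude.

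The restriction is also unnecessary. For every $K\le 2^n$, \cref{th:loz} gives $\log M = O(\log^2 n)$. This is below $\log^3 n$ for large $n$, hence within the range of \cref{lem:disperser_smallK} with $c=3$.

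Your worry about $\C'$ is unfounded: its degree from \cref{th:loz} is $n(\log M)^{O(\log\log M)} = n\,2^{O((\log\log n)^2)} = n^{1+o(1)}$. Since $\C'$ is used only in the single branch $i=n$, give that branch its own seed set; the proof of \cref{lem:construction_expander} never mixes the two branches. Padding or merging it with $\mcR$ would instead multiply its seed by $n+1$. With separate seed sets the degree is $n|\mcR| + n^{1+o(1)} = n^2\poly(\log n)$. (The paper states $\log M = O(\log^2\log n)$ at this point, but only $\log M = \poly(\log n)$ is actually needed.)
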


\begin{proof}
  The proof is similar as for~\cref{prop:better_disperser}. 
  The composition lemma~\ref{lem:construction_expander} is applied with
  \setlist{nolistsep}
  \begin{itemize}[leftmargin=*, label=--]
    \item 
      The expander $\C$ being the above expander from~\cref{th:loz} with $\eps=1/2$.  
      The value of $|\mctR|$ and $M = |\mctR|/n$ are the same, thus $\C$ has expansion $e = (1-\eps)|\mctR|$.  
      Note that $e= (1-\eps)nM$. 
    \item 
      The disperser $\D$ is also the one from lemma~\ref{lem:disperser_smallK} with $c=3$.  
    \item 
      The additional expander $\C'$ is also the one in~\cref{th:loz} with expansion up to
      \[
        M = n 2^{O(\log^2 \log n)}.
      \]
  \end{itemize}
We verify that the seed is at most $n^2 \poly(\log n)$. 
  Note that $\log M \le O(\log^2 \log n)$. 
  Hence, the degree in~\cref{th:loz} is 
  \[
    |\mcR| \le n 2^{O(\log^2 \log M)} \le n 2^{O(\log^2 \log \log n)} \le O(n \log n), 
  \]
  which is even much less than the required quadratic bound.

  The expansion of $\C'$ equals its degree, which is $\omega(n)$, and this is at least $e/M = (1-\eps)n$. 
  \Cref{lem:construction_expander} now ensures expansion
  \[
    \frac {e - \delta | \mctR|}{2M} \ge \frac {(1 - o(1))|\mctR|}{2M} = (\tfrac 1 2 - o(1))n. 
  \]
  Finally, 3 copies of the graph increases the expansion to at least~$n$. 
\end{proof}

\subsection{Reducing the right size}\label{s:nexpander}

It is well known that by merging dispersers with geometrically increasing right size, 
one can build a graph with $1$-expansion. 

\begin{lemma}\label{lem:unite_dispersers}
  For $i = 0, \ldots, k-1$, fix $(2^i, \eps)$-dispersers with the same left set and pairwise disjoint right sets of size $2^{i}$. 
  The result is a $(1-\eps)$-expander up to~$2^k-1$ with right size $2^k-1$. 
\end{lemma}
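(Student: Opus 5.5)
The argument is a direct counting proof. Let $G$ be the union of the $k$ dispersers $\D_0,\ldots,\D_{k-1}$, with right set the disjoint union $\mcY_0\cup\cdots\cup\mcY_{k-1}$, where $|\mcY_i|=2^i$. The right size is then $\sum_{i<k}2^i=2^k-1$. It remains to show that every left set $S$ with $1\le|S|\le 2^k-1$ has at least $(1-\eps)|S|$ neighbors.

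Fix such an $S$ and let $j$ be the largest index with $2^j\le|S|$. Since $|S|\le 2^k-1$, we have $j\le k-1$, and $|S|<2^{j+1}$. For each $i\le j$ we have $|S|\ge 2^i$, so pick any subset $S_i\subseteq S$ with $|S_i|=2^i$. By the disperser property of $\D_i$, the set $S_i$, and hence $S$, has at least $(1-\eps)2^i$ neighbors in $\mcY_i$. (If the definition is read as ``size exactly $2^i$'', the subset step is what handles larger $S$; monotonicity of the neighborhood in $S$ makes this immediate.)

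The sets $\mcY_i$ are pairwise disjoint, so these neighbor counts add. This gives
\[
|N(S)|\;\ge\;\sum_{i=0}^{j}(1-\eps)2^i=(1-\eps)(2^{j+1}-1)\;\ge\;(1-\eps)|S|,
\]
where the last step uses that $|S|$ is an integer less than $2^{j+1}$, so $|S|\le 2^{j+1}-1$.

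There is no real obstacle. The one point needing care is this final integer bound: it lets the geometric sum $2^{j+1}-1$ dominate $|S|$ with no loss of a factor 2, which is why the expansion stays $1-\eps$ rather than $(1-\eps)/2$.
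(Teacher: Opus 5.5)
Your proof is correct, but it decomposes $S$ differently from the paper.

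The paper partitions $S$ according to the binary expansion of $|S|$ into disjoint parts $P_i$ with $|P_i|\in\{0,2^i\}$. It applies $\D_i$ only to the nonempty parts, i.e.\ the $1$-bits of $|S|$, and sums to get exactly $(1-\eps)\sum_i|P_i|=(1-\eps)|S|$.

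You instead take, for every $i\le j=\lfloor\log|S|\rfloor$, an arbitrary subset of size $2^i$, and these subsets may overlap. The overlaps are harmless because you count neighbors inside the pairwise disjoint right sets $\mcY_i$. The integer bound $|S|\le 2^{j+1}-1$ then closes the argument.

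What each version buys:
\begin{itemize}
\item Yours uses every level up to $j$. It therefore gives the slightly stronger bound $(1-\eps)(2^{j+1}-1)$, which is nearly $2(1-\eps)|S|$ when $|S|$ is a power of two.
\item The paper's partition charges each element of $S$ to exactly one component, so it needs no rounding step. It is also the form reused in the proof of Proposition~\ref{prop:expander_quadratic_degree}, where the low-order remainder of $S$ is assigned to the separate expander rather than to small dispersers.
\end{itemize}
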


\begin{proof}
  Let $\D_i$ denote the $i$-th disperser. Consider a left set $S$ with $|S| < 2^k$. 
  To see that $S$ has many neighbors, partition $S$ according to the binary expansion of $|S|$, 
  more precisely, let $P_0 \cup \ldots \cup P_k$ be a partition of $S$ such that $|P_i| \in \{0, 2^i\}$. 
  The neighbors of $S$ contain 
  \[
    \bigcup_{i \le k} \D_i(P_i), 
  \]
  which is a union of disjoint sets, since the dispersers $\D_i$ have disjoint right sizes. 
  Thus, the number of neighbors is at least
  \[
     \sum_{i \le k}  |\D_i(P_i)| \ge (1-\eps) \sum_{i \le k} |P_i| = (1-\eps) |S|
  \]
  Therefore, the graph of the lemma is a $(1-\eps)$-expander.   
\end{proof}

Unfortunately, this construction increases the degree by a factor~$k$. 
Our plan is to replace the dispersers for small $i \le k - O(\log n)$ by a single expander. 

\begin{proposition}\label{prop:expander_quadratic_degree}
  There exists an explicit family of graphs containing for each $n$ and~$K'$
  a graph with $n$-expansion up to~$K'$, left size $2^n$, degree $n^2 \poly(\log n)$ and right size $(1+O(1/\log \log n))nK'$. 
\end{proposition}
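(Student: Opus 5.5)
The plan is to follow the idea announced right after lemma~\ref{lem:unite_dispersers}. Sets of small size will be handled by a single expander from corollary~\ref{cor:main_expander}, whose right side is only a negligible fraction of $nK'$. Sets of large size will be handled by a few dispersers from proposition~\ref{prop:better_disperser} with geometrically increasing right sizes. All these graphs share the left set $\{0,1\}^n$, have disjoint right sets, and are united.

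First fix the parameters. Let $K = 2^k$ be a power of two with $K' \le 2^k < 2K'$; I may need to handle $K'$ more carefully later to get the $1+O(1/\log\log n)$ factor. Let $t = \lceil c \log n \rceil$, and let $\C_0$ be the graph from corollary~\ref{cor:main_expander} with $n$-expansion up to $K_0 = 2^{k-t}$. Its right size is $K_0\poly(n) \le K/(n\log\log n)\cdot n$ once $c$ exceeds the exponent of that polynomial plus a little, so it contributes at most $O(nK'/\log\log n)$ right nodes. For each $i = k-t, \ldots, k-1$, take $\D_i$ from proposition~\ref{prop:better_disperser}: a $(2^i, \eps)$-disperser with right size $n2^i$ and $\eps = 2/\log\log n$. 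Taking $\lceil (1+2\eps)\rceil$ copies (or rescaling the right size by a factor $1+O(\eps)$) ensures that each set of size $2^i$ receives at least $n2^i$ neighbors. The total right size is then $\sum_i n2^i(1+O(\eps)) + o(nK) \le (1+O(1/\log\log n))nK$. The degree is $n^2\poly(\log n)$ from $\C_0$, plus $t = O(\log n)$ dispersers each of degree $n^2\poly(\log n)$, so the total degree is still $n^2\poly(\log n)$.

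Now verify expansion for $S$ with $|S| \le K'$. Write $|S| = a + \sum_{i\ge k-t} b_i 2^i$ with $a < 2^{k-t}$ and $b_i \in\{0,1\}$, following the binary-expansion partition of lemma~\ref{lem:unite_dispersers}. Split $S$ accordingly into $P \cup \bigcup_i P_i$ with $|P| = a$ and $|P_i| = b_i 2^i$. The set $P$ gets $na$ neighbors in $\C_0$, and each $P_i$ gets at least $n2^i$ neighbors in $\D_i$. Since the right sets are disjoint, the neighborhoods add up to $n|S|$.

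The main obstacle is reconciling the dispersers' loss factor $(1-\eps)$ with exact $n$-expansion while keeping the right size at $(1+O(1/\log\log n))nK'$ rather than $2nK'$. Two things are at stake. First, inflating each disperser's right side by $1/(1-\eps)$ must be compatible with proposition~\ref{prop:better_disperser}, which fixes the right size at exactly $Kn$; I would instead apply it with parameter $2^i$ and right size $n2^i(1+3\eps)$ by taking a few disjoint extra copies scaled down, or use the fact that the disperser covers a $(1-\eps)$ fraction of a larger target. Second, when $K'$ is not a power of two, the top disperser must be sized to $K' - 2^{k-1}$ rather than $2^{k-1}$; I would handle this by decomposing $K'$ itself in binary and using dispersers only for its bits above level $k-t$. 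A set $S$ with $|S| \le K'$ must then be partitioned into pieces fitting those bits, which needs a greedy argument: compare $|S|$ with the prefix of $K'$ and use the expander $\C_0$ to absorb the remainder.
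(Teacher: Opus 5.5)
Your construction and expansion argument are the paper's. One expander from corollary~\ref{cor:main_expander} absorbs a remainder of size at most $2^{k-t}$; note that this remainder must be allowed to equal $2^{k-t}$, not only be strictly smaller, to cover $|S|=2^k=K'$. The $O(\log n)$ dispersers $\D_{k-1},\ldots,\D_{k-t}$ from proposition~\ref{prop:better_disperser}, on disjoint right sets, take the top bits of $|S|$.

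On the $(1-\eps)$ loss: $\lceil 1+2\eps\rceil=2$ full copies would double the right side. What works, and what the paper does, is your ``scaled-down extra copy'': a second copy whose right nodes are merged into equal groups of size about $(1-\eps)/\eps$. Merging into equal groups preserves the $\eps$-disperser property. So this copy adds $\eps n2^i$ neighbors at a cost of about $\eps n2^i/(1-\eps)$ right nodes.

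The genuine gap is the second issue you flag, non-power-of-two $K'$. With $k=\lceil\log K'\rceil$ the right side is $(1+O(\eps))n2^k$, which is close to $2nK'$ when $K'$ is just above a power of two. The paper makes the same choice of $k$ and writes the bound as $nK'(1+O(\eps))$ without comment, so it does not settle this either.

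Your proposed repair, keeping dispersers only at the binary digits of $K'$, does not work, because a $(2^m,\eps)$-disperser guarantees nothing for sets smaller than $2^m$. Take $K'=2^m+2^{m-5}$ and $|S|=2^m-1$. The disperser for the top bit of $K'$ is useless for $S$. The one for $2^{m-5}$, together with $\C_0$, can absorb only about $2^{m-5}+2^{k-t}$ elements, so no greedy assignment exists.

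What does work is to use successive halvings of $K'$ instead of its binary digits:
\begin{itemize}
\item Let $a_0=K'$ and $a_j=\lceil K'/2^j\rceil$ for $j=1,\ldots,t$.
\item Take $(a_j,\eps)$-dispersers, which proposition~\ref{prop:better_disperser} provides for every $K$. Boost them as above to give $na_j$ neighbors with right size $(1+O(\eps))na_j$.
\item Give $\C_0$ expansion up to $a_t$.
\item Start from $r_0=|S|\le a_0$, and put $a_j$ fresh elements into $P_j$ whenever the remainder satisfies $r_{j-1}\ge a_j$.
\end{itemize}
Since $a_{j-1}\le 2a_j$, the invariant $r_j\le a_j$ is preserved, so the final remainder fits into $\C_0$. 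The total right size is $(1+O(\eps))n(K'+t)+a_t\poly(n)$. This equals $(1+O(1/\log\log n))nK'$ once $K'\ge 2^t$, with $t=\lceil c\log n\rceil$ for a suitable constant $c$. Some lower bound of this kind on $K'$ is implicit in the paper as well, since its construction needs $k\ge d_n$.
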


\begin{proof}
  Let $\eps = 2/\log \log n$. 
  Let $\E$ be the graph in corollary~\ref{cor:main_expander} with right size $\eps K'$. 
  Thus, this graph has $n$-expansion up to $\eps K'/p(n)$, where $p(n)$ is a polynomial. 
  Let $d_n = \lceil \log (\tfrac 1 \eps p(n)) \rceil$. 

  Let $\D_i$ for all $i$ be a $(2^i, \eps)$-dispersers with right size $2^in/(1-\eps)$. 
  They can be obtained from~\cref{cor:main_expander} by taking 2 copies and merging right nodes in 1 copy. Note that in $D_i$ left sets of size $2^i$ have at least $n 2^i$ neighbors, i.e., they have $n$-expansion. 
  
  Let $k = \lceil \log K' \rceil$. 
  The graph that satisfies the conditions of the proposition is the union of $\E$ and the graphs
  \[
    \D_{k-1}, \D_{k-2}, \ldots, \D_{k-d_n}, 
  \]
  where all right sides are disjoint and left sides are unified. The right side of this union has size bounded by $ \eps K' + n /(1-\eps)\sum_{i=k-d_n}^{k-1} 2^i   = n K'\ (1+ O(\eps))$. The degree of each component in the union is $n^2 \poly(\log n)$, and there are $d_n + 1 = \poly(\log n)$ components. Thus, the right size and the degree are as claimed.

  Expansion is guaranteed in a similar way as in lemma~\ref{lem:unite_dispersers} 
  by partitioning $S$ in $d_n + 1$ parts corresponding to the above graphs. 
  More precisely, the parts $P_i$ for $i = k-1, \ldots, k-d_n$ have either cardinality $0$ or $2^i$, 
  and the at most $2^{k-d_n}$ remaining elements are assigned to~$P_{\E}$. 
  Each part $P_i$ has $n$-expansion in $D_i$, for $i=k-1, \ldots, k- d_n$.
  By choice of $d_n$, part $P_{\E}$ has $n$-expansion in~$\E$. 
  Thus, $S$ has $n$-expansion in the union. 
\end{proof}

\subsection{Extending the left set to strings of arbitrary length}

Finally, we prove theorem~\ref{th:main}. 

\begin{theorem*}[Restated.]
  \propositionConstantExpander
\end{theorem*}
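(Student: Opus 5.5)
The plan is to glue together, for every input length $n$, a copy of the graph from proposition~\ref{prop:expander_quadratic_degree}, and to shrink the expansion from $n$ to $1$ by merging right nodes. For a fixed $K$ and each $n \ge 1$, let $G_n$ be the graph of proposition~\ref{prop:expander_quadratic_degree} with left size $2^n$, $n$-expansion up to $K' = K$, and right size $(1+O(1/\log\log n))nK$. Merge consecutive groups of $n$ right nodes into single nodes. A left set $S$ with $|S| \le K$ has at least $n|S|$ neighbors, so after merging it still has at least $|S|$ neighbors, counting ceilings carefully. The merged graph $H_n$ therefore has $1$-expansion up to $K$, right size $(1+O(1/\log\log n))K$, and degree $n^2\poly(\log n)$ at each left node. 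We identify all right sides of the $H_n$ with one common set, $[(1+o(1))K]$ suitably padded, and put $x \in \{0,1\}^n$ adjacent to its $H_n$-neighbors. The degree of $x$ is then $\widetilde O(|x|^2)$.

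The difficulty is that a set $S$ can mix strings of many lengths, and sharing the right set destroys the guarantee across lengths. I would handle this by giving different lengths different expansions. Strings of length $n$ get $2^{-n}$-weighted budgets, or, more simply, a factor-$2$ slack: use $G_n$ with expansion $n$ but merge only groups of about $n/c_n$ nodes, so that $H_n$ has $c_n$-expansion. Then split $S = \bigcup_n S_n$ by length. This alone does not give disjointness, so instead I would make the right sides disjoint for different lengths but with geometrically shrinking sizes. Concretely, short strings are few: there are at most $2^{n}$ strings of length $n$. For $2^{n+1} \le \epsilon K$, i.e.\ $n \le \log K - O(\log\log\log K)$, we can simply match every string of length $\le n$ injectively to a private right node. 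This costs $O(\epsilon K)$ right nodes and degree $1$. Only lengths $n \ge \log K - O(1)$ remain.

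For these long lengths, the length-$n$ copy must be cheap. I would use the expander $G_n$ with right size $\approx nK$, merged into $K_n = K/n^2$-ish nodes only when $n$ is large. That fails expansion up to $K$, however, so the real mechanism has to be different. The approach I would try first is to reduce lengths: map $x$ of length $n > n_0$ through a hashing that is injective enough. Alternatively, and more in line with the paper, I would give each length $n$ its own disjoint right part of size $(1+o(1))K\cdot w_n$, where the weights $w_n$ sum to $1+o(1)$. For $1$-expansion of the union, I need each length-$n$ part $S_n$ to obtain $|S_n|$ neighbors in its own part. That is impossible if $w_n < 1$ and $|S_n| = K$.

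So the main obstacle is the mixing of lengths. The honest fix I would commit to is to use a common right set and an online-matching/Hall argument: the right set is shared, and $x$ of length $n$ uses $G_n$ with expansion $n$ merged by factor $n/2$, giving $2$-expansion. Hall's condition for a mixed set $S$ would require controlling collisions between different $H_n$. To force this, I would instead take $G = G_{n^*}$ for a single ambient length: embed $\{0,1\}^{\le n}$ into $\{0,1\}^{n+1}$ via $x \mapsto x10^{n-|x|}$. Strings of length $m$ in $\{0,1\}^{2^j}$ blocks would be handled by dyadic length classes $[2^j, 2^{j+1})$ with disjoint right parts of size $\epsilon_j K$ for $j$ small, and full-size parts for the single relevant large class. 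I expect that verifying that only $O(\log\log)$ length classes effectively need full size, i.e.\ bounding the right size by $(1+o(1))K$, is the crux. If it fails, I would fall back to accepting a per-$K$ ambient length bound $n \le \poly(\log K)$-free padding.
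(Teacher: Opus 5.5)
There is a genuine gap. You correctly isolate the obstacle, which is strings of different lengths competing for one shared right set, but you never resolve it. The fix you commit to is to give each length a graph with $2$-expansion on a common right set and then apply Hall. This cannot work, because the per-length guarantees say nothing about overlaps between lengths. Take $m\ge 3$ lengths, each contributing $K/m$ elements, with all their neighborhoods inside the same $2K/m$ right nodes; then a set of size $K$ has only $2K/m<K$ neighbors. Your fallback, padding everything to one ambient length $n^*$, fails differently: every string gets degree $\widetilde O((n^*)^2)$ rather than $\widetilde O(|x|^2)$, and $\{0,1\}^*$ has no ambient length. You also never deal with the fact that infinitely many lengths remain after short strings get private nodes.

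The missing idea is that disjointness between lengths is unnecessary. It suffices that each length class expands by at least the number of classes, so that pigeonhole finishes the argument. The paper groups lengths dyadically ($2^{j-1}\le|x|<2^j$) and trims both ends:
\begin{itemize}
\item strings with $|x|\le\frac12\log K$ get private right nodes, costing $O(\sqrt K)$ extra nodes;
\item strings with $|x|\ge\sqrt K$ are joined to all $K$ right nodes, costing degree $K\le|x|^2$.
\end{itemize}
This leaves fewer than $k=\lfloor\frac12\log K\rfloor$ classes. For class $j$, the paper takes the graph of proposition~\ref{prop:expander_quadratic_degree} with $n_j=2^j$ and $n_j$-expansion only up to $K/\log K$, so its right size is $(1+o(1))n_jK/\log K$. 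It then merges groups of $\lceil n_j/\log K\rceil$ right nodes, which gives right size $(1+o(1))K$ and expansion at least $\frac12\log K$, and all these graphs share one right set. A set $S$ in the middle range has at least $|S|/k$ elements in some class, and those alone have at least $|S|$ neighbors.

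Your idea of merging only groups of $n/c_n$ nodes was the right instinct. Applied to the graph with expansion up to $K$, though, it inflates the right size to $c_nK$. The saving comes from needing expansion only up to about $K/k$, since pigeonhole hands you a subset of that size.
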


\begin{proof}
  \newcommand{\Gsmall}{G_{\textnormal{small}}}
  \newcommand{\Glarge}{G_{\textnormal{large}}}
  \newcommand{\Lsmall}{L_{\textnormal{small}}}
  \newcommand{\Llarge}{L_{\textnormal{large}}}
  \begin{itemize}[leftmargin=*, label=--]
     \item 
       Let $\Gsmall$ be the graph with left set $A=\{x : |x| \le \tfrac 1 2 \log K\}$, right set equal to a copy of $A$,
       and each left node $x$ connected to its own unique neighbor on the right side.
       The right size of $\Gsmall$ is bounded by $2 \sqrt{K}$ and each left node has degree~$1$.
    
    \item 
      Let $\Glarge$ be the graph whose left nodes are all strings of length at least $\sqrt{K}$ 
      whose right set is $[K]$ and in which each left node is connected to all right nodes. 
    
    \item 
      For $j$ such that $2^j \ge \log K$, let $G_j$ be a graph with left size $2^{2^j}$, 
      right size $(1+o(1))K$, and $\lfloor \tfrac 1 2\log K\rfloor$-expansion up to~$K/\log K$. 
      Such a graph is obtained from \cref{prop:expander_quadratic_degree} by merging right nodes. 
      Here are the details. Start with a graph from the proposition in which

      $\bullet$ the left side consists of all strings $x$ with $2^{j-1} \le |x| < 2^j$, 
	 thus for $n_j = 2^j$, its size is at most~$2^{n_j}$, 
     
      $\bullet$ has $n_j$-expansion up to $K/\log K$, 

      $\bullet$ has right size $(1+o(1))n_j K /\log K$,

      $\bullet$ has left degree $(n_j)^2 \poly(\log n_j)$, 
	 which is $|x|^2 \poly(\log (|x|)$ for every left node $x$, since $n_j \le 2|x|$.

      Next, for $T= \lceil n_j/\log K \rceil$, merge groups of $T$ right nodes into 1 node. 
      This shrinks both the expansion and the right size by a factor of $T$  (actually, the expansion may shrink less, but we consider the most pessimistic case).   
      So, the expansion becomes at least $n_j/T\ge \tfrac 1 2 \log K$ and the right size becomes $[(1+o(1))n_j K /\log K]/T$ = $(1+o(1)) K$.
  \end{itemize}

  \medskip
  \noindent
  Merge the right sides of the graph $\Glarge$ and all the graphs $G_j$ for 
  \[
     j = \lfloor \log \log K \rfloor, \ldots, \lceil \tfrac 1 2 \log K \rceil.
   \]
  The final graph is the disjoint union of the above graph with $\Gsmall$.

  By construction, the final graph has right size bounded by $(1+o(1))K + 2 \sqrt{K} = (1+o(1))K$. 
  The degree of each  left node $x$ is $O(|x|^2 \poly(\log |x|))$. 

  It remains to verify that the final graph has $1$-expansion up to~$K$. 
  Let $S$ be a set with $|S| \le K$. 
  If $S$ contains a string of length at least $\sqrt{K}$, then $S$ immediately has $K$ neighbors and we are done.  
  Also, nodes of length at most $\tfrac 1 2 \log K$ have their own unique neighbor in $\Gsmall$ and so (because of the disjoint union) also in the final graph. 
  Therefore, in the analysis, they can be discarded from~$S$.

  It remains to analyze the case $S \subseteq \{x : \tfrac 1 2 \log K < |x| < \sqrt{K}\}$. 
  Note that there are less than $k = \lfloor \tfrac 1 2 \log K\rfloor$ graphs $G_j$ in the construction. 
  Hence, there exists a $G_j$ whose left set contains at least $|S|/k$ elements of~$S$. 
  Since all these graphs have $k$-expansion, we conclude that $S$ has at least $|S|$ neighbors. 
  The theorem is proven.
\end{proof}


\end{document}